\newcommand{\oracle}{{\mathcal{O}}}
\newcommand{\redClusters}{S_\ell}
\newcommand{\blueClusters}{S_r}
\newcommand{\HVD}[1]{\ensuremath{\textup{\textsf{HVD}}(#1)}}
\newcommand{\FCVD}[1]{\ensuremath{\textup{\textsf{FCVD}}(#1)}}
\newcommand{\VD}[1]{\ensuremath{\textup{\textsf{VD}}(#1)}}
\newcommand{\redHVD}{\ensuremath{\textup{\textsf{HVD}}(\redClusters)}}
\newcommand{\blueHVD}{\ensuremath{\textup{\textsf{HVD}}(\blueClusters)}}
\newcommand{\mergeCurve}{\sigma}
\newcommand{\hreg}[1]{\ensuremath{\textup{\textsf{hreg}}}(#1)}
\newcommand{\fcreg}[1]{\ensuremath{\textup{\textsf{fcreg}}(#1)}}
\newcommand{\df}[1]{\ensuremath{\textup{\textsf{d}}_\textup{\textsf{f}}}(#1)}
\newcommand{\hregs}[2]{\ensuremath{\textup{\textsf{hreg}}_{#1}(#2)}}
\newcommand{\freg}[2]{\ensuremath{\textup{\textsf{freg}}_{#1}(#2)}}
\newcommand{\FVD}[1]{\ensuremath{\textup{\textsf{FVD}}(#1)}}
\newcommand{\fskel}[1]{\ensuremath{{\mathcal{T}}(#1)}}
\newcommand{\bh}[1]{\ensuremath{\textup{\textsf{b}}_\textup{\textsf{h}}}(#1)} 
\newcommand{\theProblem}{{RB-Preprocessing problem}}
\newcommand{\hrego}{\ensuremath{\textup{\textsf{hreg}}}()}
\newcommand{\frego}{\ensuremath{\textup{\textsf{freg}}}()}
\title{Searching edges in the overlap of two plane graphs\thanks{
Research partially completed while J. I. was on sabbatical at the Algorithms Research Group of the D\'epartement d’Informatique 
at ULB with support from a Fulbright Research Fellowship, F.R.S.-FNRS, and NSF grants CNS-1229185, CCF-1319648, and CCF-1533564. E. K. was supported in part by F.R.S.-FNRS, and by SNF project P2TIP2-168563 under SNF Early PostDoc Mobility program; S. L. is directeur de recherches du F.R.S.-FNRS.
}}
\author{John Iacono\inst{1}
\and Elena Khramtcova\inst{2}
\and Stefan Langerman\inst{2}}
\institute{Department of Computer Science and Engineering, New York University 
\\  New York, USA, \email{iacono@nyu.edu} 
\and
Computer Science Department, Universit\'e libre de Bruxelles (ULB) 
\\  Brussels, Belgium, 
\email{{\{elena.khramtcova,stefan.langerman\}@ulb.ac.be}}} 
\authorrunning{J. Iacono, E. Khramtcova, and S. Langerman}
\begin{document}
\maketitle

\begin{abstract}
Consider a pair of plane straight-line 
graphs 
whose edges are colored red and blue, respectively, and let $n$ be the total  
complexity of both  graphs.   
We present a $O(n\log{n})$-time $O(n)$-space technique to preprocess such a pair of graphs, that enables 
efficient searches among the red-blue intersections along edges of one of the graphs. 
Our technique has a number of applications to geometric problems. 
This includes: 
(1) a solution to the \emph{batched red-blue search} problem~[Dehne et al. 2006] in $O(n\log{n})$ queries to the oracle; 
(2) an algorithm to compute  the maximum vertical distance  between a pair of 3D polyhedral terrains,   
one of which is convex, in $O(n\log{n})$ time, where $n$ is the total complexity of both terrains;      
(3) an algorithm to construct the Hausdorff Voronoi diagram 
	of a family  of point clusters in the plane in  
$O((n+m)\log^3{n})$ time and $O(n+m)$ space,   
	where $n$ is the total number of points in all clusters and $m$  is the number of \emph{crossings} 
	between all clusters; 
(4) an algorithm to construct the farthest-color Voronoi diagram of the corners of $n$ disjoint axis-aligned rectangles in $O(n\log^2{n})$ time; 
(5) an algorithm to solve the stabbing circle problem for $n$ parallel line
 segments in the plane
 in optimal $O(n\log{n})$ time.
All these results are new or improve on the best known algorithms.  
\end{abstract}

\section{Introduction}\label{sec:intro}

Many geometric algorithms have 
subroutines that involve investigating intersections between two plane graphs, often assumed being colored 
red and blue respectively. 
Such subroutines differ in the questions that are asked about the red-blue intersections. 
The most well-studied questions  are 
to report all red-blue intersections or to count them. 
It is shown how to report all the intersections  
in optimal $O(n\log{n} +k)$ time and $O(n)$ space~\cite{chan94,CEGS94,ms1988,ms01,ps94},
 where $n$ is the total complexity of 
both graphs, and $k$ is the size of the output. 
Note that $k$ may be $\Omega(n^2)$. 
Counting the red-blue intersections can be carried out in $O(n\log{n})$ time and  $O(n)$ space~\cite{CEGS94,ms01}. 

In this paper, we consider the situation  where  one wants to \emph{search} the red-blue intersections, though avoiding 
to compute all of them. Problems of this type 
  appear as building blocks in diverse geometric algorithms. The latter include: distance measurement between 
polyhedral terrains~\cite{CEGS94}, motion planning
\cite{Guibas1989}, construction of various generalized Voronoi diagrams 
(divide-and-conquer~\cite{dmt06,CEGGHLLN11} or randomized incremental~\cite{CKLP16} construction). 
Therefore solving such problems efficiently is of high importance.

Often it is guaranteed that each red edge contains at most one sought red-blue intersection, and an oracle is provided,
that, given a red-blue intersection,  is able to quickly determine 
to which side of that intersection  the sought intersection lies 
along the same red edge (see Section~\ref{sec:batched}
for more details on this setting).
A particular case, when the red graph consists of a unique edge, 
appeared under the name of \emph{segment query} in the randomized  incremental construction algorithm for the Hausdorff Voronoi diagram~\cite{CKLP16}, 
and under the name of \emph{find-change} query in an algorithm to solve the stabbing circle problem for a set of line segments~\cite{CKPSS17}. 
If the blue graph is a tree, it can be preprocessed  in $O(n\log{n})$ time
using the \emph{centroid decomposition}~\cite{aronov2006,CKLP16}. Centroid decomposition supports 
segment (or find-change) queries for arbitrary line segments, requiring only $O(\log n)$ queries to the oracle~\cite{CKLP16,ckpss16-eurocg}.
If the blue graph is not a tree, then  in $O(n\log{n})$ time it can be preprocessed for point location, and a nested point location 
along the red edge is performed, which requires $O(\log^2 n)$ queries to the oracle~\cite{CEGGHLLN11,CKPSS17}.
For two general plane straight-line graphs (where the red graph is not necessarily one edge) 
the problem is called \emph{batched red-blue intersection problem} (see Problem~\ref{prob:batched}). 
It was formulated in Dehne et al.~\cite{dmt06}, and solved in
 $O(n\log^3{n})$  time and $O(n\log^2{n})$ space~\cite{dmt06} using \emph{hereditary segment trees}~\cite{CEGS94}. 
 However, this is optimal in neither time nor space. 

We present a data structure that provides a clear interface for efficient searches
for red-blue intersections along a red edge.   
Our data structure can be used to improve the above
 result~\cite{dmt06} (see Section~\ref{sec:batched}), which includes an improvement on segment (or find-change) queries in plane straight-line graphs. 
Our data structure can also handle more general search problems, e.g., a 
setting when a red edge may have more than one sought red-blue intersection on it. 
Below we state our result and its applications.

\subsection{Our result}
Let $R$, $B$ be a pair of plane straight-line\footnote{Our technique can be trivially generalized to apply to $x$-monotone 
pseudoline arcs in place of straight-line edges of the graphs.} graphs. 
We address the following problem. 

\begin{problem}[\theProblem]
\label{prob:rb}
Given graphs $R,B$,  
construct a data structure that for each edge $e$ of $R$ stores implicitly the  intersections between $e$ and the edges of $B$
sorted according to the order, in which these intersections appear along $e$.  
Let $T_e$ be a perfectly balanced binary search tree
 built on the sorted sequence of intersections along $e$. 
The data structure should 
answer efficiently  the following \emph{navigation queries} in $T_e$:
\begin{itemize}
\item Return  
the root of $T_e$;
\item Given a non-root node of $T_e$, return the parent of this node; 
\item Given a non-leaf node of $T_e$, return the left (or
the right) child of this node.
\end{itemize}
\end{problem}

We provide a solution to the \theProblem, where each of the navigation queries can be answered in $O(1)$ time, 
and
constructing the data structure requires $O(n\log n)$ time and $O(n)$ space,
where $n$ is the total number of vertices and edges in both $R$ and $B$ (see Section~\ref{sec:technique}).

The resulting data structure allows for fast searches 
for \emph{interesting} intersections between edges of $R$ and the ones of $B$. We note that the notion of {interesting} 
is external to the data structure: It is not known  at the time of preprocessing, 
but rather guides the searches on the data structure after it is built.
In particular, for the input graphs $R$ and $B$, the data structure is always the same, while interesting intersections can be defined in several ways, 
which of course implies that the searches may have different outputs. 

Our preprocessing  technique 
can be applied to a number of geometric problems. 
We provide a list of applications, which is not exhaustive.
For each application, we show how to reduce the initial problem 
to searching for interesting red-blue intersections, 
and how to navigate the searches, that is, how to decide, which subtree(s) of the current node of the (implicit) tree to search. 
Using our technique we are able to make the  
contributions listed below, 
and we expect it to be applicable to many more problems.   

\begin{enumerate}
\item The \emph{batched red-blue search problem}~\cite{dmt06} for a pair of segment sets
  can be solved in $O(n)$ space and $O(n\log{n})$ queries to the oracle, where $n$ is the total number of segments in both sets (see Section~\ref{sec:batched}). 
The problem is 
as follows. 
Given are: (1) two sets of line segments in the plane (colored red and blue, respectively), 
where no two segments in the same set intersect;   
and (2) an oracle that, given a point $p$ of intersection between a red segment $r$ and a blue segment $b$, 
determines to which side of segment $r$ with respect to point $p$ the {interesting red-blue intersection} lies.
It is assumed that each segment contains at most one interesting intersection. 
The \emph{batched red-blue search problem} is to find all interesting red-blue intersections. Our solution is an improvement on the one of Dehne~et al.~\cite{dmt06}
 which requires $O(n\log^2{n})$ space and $O(n\log^3{n})$ queries to the oracle. 

\item The maximum vertical distance between a pair of 
	polyhedral terrains, one of which is convex, can be computed in $O(n\log{n})$ time and $O(n)$ space (see Section~\ref{sec:distance}).
Previously, a related notion of the {minimum} vertical distance between a pair of non-intersecting polyhedral terrains was considered, 
and it was shown how to find it in $O(n^{4/3 + \epsilon})$ time and space for a pair of general polyhedral terrains~\cite{CEGS94}, 
in $O(n\log{n})$ time for 
one convex and one  general terrain~\cite{Z97},  and in $O(n)$ time for two convex terrains~\cite{Z97}. 
Our technique yields an alternative solution for the second case within the same time bound as in~\cite{Z97}.
The  maximum distance for non-intersecting polyhedra can be found by the above methods~\cite{CEGS94,Z97}, 
however it is different from the minimum distance  
for intersecting polyhedra: 
		asking about the former is still interesting, while the latter is trivially zero. 
\item The Hausdorff Voronoi diagram of  
a family of point clusters in the plane can be constructed in $O((n+m)\log^3{n})$ time,
	where $m$ is the total number of pairwise \emph{crossings} of the clusters
(see Sections~\ref{sec:hvd} and~\ref{sec:hvd-arb}).
 		Parameter $m$ can be $\Theta(n^2)$, but is small in practice~\cite{pl04,p04}. 
There is a deterministic algorithm  to compute the diagram in  $O(n^2)$ time~\cite{EGS89}. 
All other known deterministic algorithms~\cite{p04,pl04} have a running time that depends on parameters of the
		input, that cannot be bounded by a function of $m$.\footnote{ The algorithms have time complexity 
		respectively  $O(M + n \log^2 n + (m + K)\log n)$ and 
		$O( M' + (n+m+K')\log n)$, 
		where parameters $M,M',K,K'$ reflect the number of pairs of clusters such
		that one is enclosed in a certain type of enclosing circle of the other.} 
		Each of them may take $\Omega(n^2)$ time 
		even if $m=0$.  
There is a recent randomized algorithm with 
expected time complexity $O((m + n \log n) \log n))$~\cite{kp16a}. 
For a simpler case of non-crossing clusters ($m=0$), 
the diagram can be computed in deterministic $O(n\log^5{n})$ time\footnote{The time complexity claimed in~\cite{dmt06} is $O(n\log^4{n})$. See the discussion in
 Section~\ref{sec:hvd}.}~\cite{dmt06},  
or in expected  $O(n \log^2{n})$ time~\cite{CKLP16,kp16a}. 
Thus our algorithm is the best deterministic algorithm for 
the case of small number of crossings. 
The time  complexity of our algorithm is subquadratic in 
$n$ and $m$ and depends only on them,  
unlike any  previous deterministic algorithm. 
 
\item The farthest-color Voronoi diagram for a family of $n$ point clusters, where each cluster is all the  
corners of an  axis-aligned 
rectangle,\footnote{A cluster is either the four corners of a non-degenerate axis-aligned rectangle, 
or the two endpoints of a horizontal/vertical segment, or a single point.} and these rectangles are pairwise disjoint,
 can be computed in $O(n\log^2{n})$ time and $O(n)$ space (see Section~\ref{sec:fcvd}). 
Previous results on the topic are as follows. 
For  arbitrary point clusters, the diagram may have 
complexity $\Theta(n^2)$  and 
can be computed in $O(n^2)$ time and space~\cite{ahiklmps01,EGS89}, where $n$ is the total number of points in all clusters. 
When clusters are  pairs of endpoints of $n$ parallel line segments, 
the diagram  has $O(n)$ complexity 
and can be constructed in $O(n\log{n})$ time and $O(n)$ space~\cite{CKPSS17}. 
In this paper, we broaden the class of inputs, for which the diagram can be constructed in subquadratic time.
We also show that the complexity of the diagram for such inputs is $O(n)$.  

\item The stabbing circle problem for line segments in the plane can be solved in time $O(\mathcal{T}_{\HVD{S}} +\mathcal{T}_{\FCVD{S}}+(|\HVD{S}|+|\FCVD{S}|+m)\log n)$, where $|\HVD{S}|$ and $|\FCVD{S}|$ denote respectively the complexity of  
	the Hausdorff and the farthest-color Voronoi diagram of the pairs of endpoints of segments in $S$, 
	 $\mathcal{T}_{\HVD{S}}$ and $\mathcal{T}_{\FCVD{S}}$ denote the time to compute these diagrams, and $m$ 
is a parameter reflecting the number of ``bad'' pairs of segments in $S$\footnote{See Section~\ref{sec:stabbing} for the definition of $m$.} (See Section~\ref{sec:stabbing}). 
	If all segments in $S$ are parallel to each other, the stabbing circle problem can be solved in optimal 
		$O(n\log{n})$ time and $O(n)$ space. 
This is an improvement over the recent   $O(\mathcal{T}_{\HVD{S}} +\mathcal{T}_{\FCVD{S}}+(|\HVD{S}|+|\FCVD{S}|+m)\log^2 n)$
		time technique for general segments, which yielded an  $O(n\log^2{n})$ time algorithm for parallel segments~\cite{CKPSS17}. 
\end{enumerate}

\section{The technique to preprocess a pair of graphs}
\label{sec:technique}
Suppose we are given 
two plane straight-line graphs $R$ and $B$, and let $n$ be the total number of vertices and edges in both $R,B$. 
We assume that no two vertices of the graphs have the same $x$ coordinate. 
In this section, it is more convenient to  treat $R$ and $B$ as two sets of line segments in the plane, where 
the segments in $R$ are colored red, and the ones in $B$ are colored blue. 
No two segments of the same color intersect, although they may 
share an endpoint.

Our preprocessing technique consists of three phases. 
In the first phase, we invoke an algorithm that finds the intersections between the edges of the two graphs (see Section~\ref{sec:mantler-snoyeink}). 
After that, in the second phase, we build a linearized \emph{life table} 
for the red segments (see Section~\ref{sec:ds}).  
Finally  we sweep the life table with a line, which provides us the 
resulting data structure (see Section~\ref{sec:resulting-ds}).

\subsection{Finding red-blue intersections}
\label{sec:mantler-snoyeink}
For the sets $R$ (red) and $B$ (blue), we need to find all the intersections between segments of different color, i.e., all the red-blue intersections. 

It is known how to  count the red-blue intersections
in optimal $O(n\log{n})$ time~\cite{CEGS94,ms01,ps94}, or 
report them in optimal $O(n\log{n} +k)$ time~\cite{chan94,CEGS94,ms1988,ms01,ps94}, 
where $k$ is the total number of the intersections.  
The space requirement of each of these algorithms is $O(n)$.
The algorithm by Mantler and 
Snoeyink~\cite{ms01} 
processes the red-blue intersections in batches (called \emph{bundle-bundle} intersections). 
In $O(n\log{n})$ time and $O(n)$ space it can implicitly discover all the red-blue
 intersections, without reporting every one of them individually. 
The latter feature  is useful for our technique, 
therefore we invoke the Mantler-Snoeyink algorithm in its first phase.
We summarize the algorithm below. 

To describe the algorithm, we need to define the following key notions: 
the \emph{witness} of a (bichromatic) segment  intersection, 
a \emph{pseudoline at time $i$}, and 
a (monochromatic) \emph{bundle} of segments at time $i$.

\begin{figure}
\centering
\includegraphics{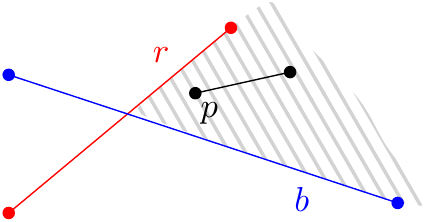}
\caption{Two line segments, $r$ and $b$, the closed right wedge formed by them, and the witness $p$ of their intersection}
\label{fig:witness}
\end{figure}

Given a red segment $r$ that intersects a blue segment $b$, the \emph{witness} of their intersection  
is the leftmost of the endpoints 
of segments in $S$ that are contained in the closed right wedge formed by $r$ and $b$. 
The closed right wedge formed by $r$ and $b$ is the intersection of two closed right halfplanes:
the one bounded by the line through $r$, and  the one  bounded by the line through $b$,  
see the shaded area in Figure~\ref{fig:witness}.                                      
Note that 
the witness always exists: it 
may be an endpoint of a segment different from $r$ or $b$ (as in Figure~\ref{fig:witness}), or it may be an endpoint of either $r$ or $b$. 

Let $n'$ be the total number of distinct endpoints of the segments in $R$ and $B$. 
Let $p_1, p_2, \ldots, p_{n'}$ denote the sequence of these endpoints
 in the order of increasing $x$ coordinate.  The  basis of the Mantler-Snoeyink algorithm can be formulated as follows.

\begin{lemma}
\label{lemma:pseudoline}
For each $i, 1 \leq i \leq n'$ there is a $y$-monotone curve $\ell_i$ that passes through point $p_i$, 
and subdivides the plane into two open regions (the left and the right one), such that 
all the points  $p_j, j<i$, and all the   red-blue intersections witnessed by the points $p_j, j \leq i$ are contained in the left region, 
and all the points $p_k, k > i$ together with the intersections witnessed by them are contained in the right region, and $\ell_i$ intersects each segment in $R$ or in $B$ at most once.  
\end{lemma}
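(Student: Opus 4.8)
The plan is to prove the existence of $\ell_i$ by induction on $i$, interpreting $\ell_i$ as the frontier of a left-to-right topological sweep: at ``time'' $i$ the curve passes through $p_i$ and its left region records exactly the endpoints and the bichromatic intersections already accounted for. The single geometric fact that drives the argument is the following \emph{witness property}, which I would establish first. If $p_j$ is the witness of an intersection $q=r\cap b$, then $q$ is the apex of the closed right wedge $W(r,b)$ (Figure~\ref{fig:witness}), and since both defining halfplanes are ``right'' halfplanes, every point of $W(r,b)$ has $x$-coordinate at least that of $q$; as $p_j\in W(r,b)$ this yields $x_q\le x_j$. Consequently the whole set $L_i$ that must lie in the left region, namely $\{p_j : j\le i\}$ together with all intersections whose witness has index at most $i$, is contained in the halfplane $\{x\le x_i\}$, while every endpoint $p_k$ with $k>i$ lies in $\{x>x_i\}$. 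Hence the vertical line $x=x_i$ already separates all \emph{endpoints} correctly, is $y$-monotone, and crosses every segment at most once (no segment is vertical, by the general-position assumption); the only obstruction is an intersection $q$ whose witness has index $k>i$ but with $x_q\le x_i$, which lies on the wrong (left) side of the line and must be routed back to the right region.

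For the inductive construction I would start from $\ell_{i-1}$ (with $\ell_0$ taken at $x=-\infty$, so its left region is empty) and obtain $\ell_i$ by a local deformation tied to the event point $p_i$. The sharper consequence of the witness definition controls this deformation: because $p_j$ is the \emph{leftmost} endpoint of $S$ in $W(r,b)$, the sub-wedge $W(r,b)\cap\{x\le x_j\}$, bounded by sub-arcs of $r$ and $b$ and having $q$ as its leftmost vertex, contains no endpoint of any segment in its interior. Each intersection that must be kept on the right is one with $x_q\le x_i$ but witness index $>i$; I keep such a $q$ on the right by letting the front make a leftward detour dipping past the far side of $q$, entering and leaving through the endpoint-free channel $W(r,b)\cap\{x\le x_i\}$. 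Passing from $\ell_{i-1}$ to $\ell_i$ then amounts to (a) advancing the through-point of the front from $p_{i-1}$ to $p_i$, and (b) \emph{removing} precisely the detours around the intersections whose witness is $p_i$: these now have witness index at most $i$ and must rejoin the left region, and the endpoint-freeness of their channels guarantees that retracting a detour sweeps the front over $q$ without meeting any endpoint. Monotonicity in $y$ is preserved by realizing the surviving detours over disjoint $y$-ranges, which general position permits.

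The step I expect to be the crux is verifying that each detour preserves the property that $\ell_i$ meets every segment \emph{at most once}. The danger is that routing the front around a misclassified $q=r\cap b$ forces it to cross $r$ (or $b$) both entering and leaving, creating a second crossing. I would resolve this per segment: along a fixed red segment $r$, I must show that the intersections of $r$ belonging to $L_i$ form a contiguous prefix in the order along $r$, so that the single point $r\cap\ell_i$ separates the left-classified from the right-classified intersections on $r$. This contiguity is exactly what lets the detours around distinct intersections of $r$ be merged into one crossing, and it is here that the endpoint-freeness of the sub-wedges does the work, forbidding a second transition point on the same segment. Establishing this structural claim from the witness property is the main obstacle; once it is in place, assembling the per-segment transition points into a single $y$-monotone curve through $p_i$ and checking the three separation requirements is routine, completing the induction.
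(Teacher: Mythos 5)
Your opening reductions are sound: every point of the closed right wedge $W(r,b)$ has $x$-coordinate at least that of the apex $q=r\cap b$, so $x_q\le x_w$ for the witness $w$; hence the vertical line $x=x_i$ already separates the endpoints correctly, and the only points it can misclassify are intersections $q$ with $x_q\le x_i$ whose witness comes after $p_i$. The observation that $W(r,b)\cap\{x<x_w\}$ contains no endpoint is also the right tool. But the argument stops exactly where the lemma begins. You yourself flag the decisive step --- that along each segment the already-witnessed intersections form a contiguous block adjacent to the swept endpoint, so that all detours on one segment collapse into a single crossing --- as ``the main obstacle'' and do not establish it. That claim is not a formal consequence of anything you have written: the wedges $W(r,b)$ and $W(r,b')$ of two intersections on the same red segment are in general neither nested nor disjoint, so nothing in the witness definition immediately forbids the witness of the farther-right intersection from being encountered before the witness of the farther-left one; excluding this requires a genuine argument about how consecutive wedges along a segment interact, exploiting the disjointness of same-colored segments. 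Without that argument there is no proof, only a correct identification of what must be proved.

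Two further holes would remain even if per-segment contiguity were granted. First, the channel $W(r,b)\cap\{x\le x_i\}$ is free of \emph{endpoints} but not of \emph{segments}: other red and blue segments may pass entirely through it, and a detour threading the channel can cross such a segment twice, or once in addition to a crossing of the main front elsewhere; the ``at most one crossing'' property must therefore be verified for every segment simultaneously, not only for the $r$ and $b$ defining each detour. Second, assembling the per-segment transition points into a single $y$-monotone curve requires showing that the left-classified region is met by every horizontal line in a single interval; you call this routine, but it is precisely where the bundle invariants of Mantler and Snoeyink do their work. For calibration: the paper does not prove Lemma~\ref{lemma:pseudoline} at all --- it states it as the basis of the sweep and attributes it to \cite{ms01}, where these invariants are established. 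Your text is a reasonable plan for reconstructing that proof, but as written it names the crux rather than supplying it.
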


We call such curve $\ell_i$ a \emph{pseudoline at time $i$}. 
Figure~\ref{fig:exe} shows a pseudoline at time~7, i.e.,  $\ell_7$, in dashed black lines.
Note that $\ell_7$ cannot be replaced by a vertical straight line, because it must pass through the point 7, and to the left of the intersection point 
 between the segments $r_4$ and $b_4$, and the latter point lies to the left of the former one. 

A \emph{blue bundle} 
at time $i$ is a maximal contiguous sequence of  blue segments 
that intersect the pseudoline $\ell_i$.\footnote{This definition  
can be seen as a generalization of the one 
of \emph{single-edge bundles} in Mount~\cite{M87}.} See Figure~\ref{fig:exe}, right. 
A \emph{red bundle} is defined analogously. 

The algorithm can be seen as a topological sweep with a pseudoline, where  
the only events are the endpoints $p_1, \ldots, p_{n'}$ of the segments in $R$ and $B$.
The sweepline at each moment $i$ is a pseudoline $\ell_i$
 such as defined in Lemma~\ref{lemma:pseudoline}. 
The sweepline status structure maintains all the red and blue bundles that intersect the 
 current sweepline. 
The sweepline status consists of (1) a balanced binary tree for each bundle, 
supporting insertion, deletion of segments, 
and a query for the topmost and the bottommost segment in the bundle,
 (2) a doubly-linked list for all the bundles intersecting the sweepline (bundles alternate colors), supporting insertion, deletion of the bundles, and sequential search, 
 and (3) two balanced binary trees (one per color) 
storing all the red and blue bundles in order, and supporting splitting and merging of bundles.
 
At the event point $p_i$
the algorithm processes 
the intersections witnessed by $p_i$, updates the sweepline from $\ell_{i-1}$ to $\ell_i$, 
and makes the necessary changes to bundles (i.e., splits or merges them). 
By proceeding this way, the algorithm maintains the invariant  
that all the red-blue intersections whose witness 
is to the left of the current event point $p_i$ 
are already encountered. 
We summarize the result in the following. 

\begin{theorem}[\cite{ms01}]
The Mantler-Snoeyink algorithm runs in $O(n\log{n})$ time, requires $O(n)$ space, and encounters $O(n)$ bundle-bundle intersections in total.
\end{theorem}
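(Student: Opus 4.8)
The plan is to prove the three assertions separately: (i) the sweepline status occupies $O(n)$ space, (ii) the total number of bundle--bundle intersections processed over the whole sweep is $O(n)$, and (iii) the sweep runs in $O(n\log n)$ time. Claim (iii) will follow almost mechanically once (i) and (ii) are in hand, so the real content lies in the linear bound on bundle--bundle intersections.

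For the space bound I would argue directly from the description of the status structure. At any moment the bundles partition the set of segments that cross the current pseudoline, so the balanced binary trees of the individual bundles store each such segment exactly once and together occupy $O(n)$ space. The doubly-linked list of bundles and the two per-color balanced trees each hold one entry per bundle, and the number of bundles never exceeds the number of segments on the sweepline, hence is at most $O(n)$. Thus the status occupies $O(n)$ space throughout, and since information is only discarded as the sweep advances, the total space is $O(n)$.

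The main obstacle is (ii). First I would track the current number of bundles, call it $\beta$, and bound its \emph{total increase} over the entire sweep. The count $\beta$ starts and ends at $0$ and can only grow when the sweepline is updated from $\ell_{i-1}$ to $\ell_i$ at an event $p_i$: inserting the segments that start at $p_i$ may split an existing bundle and spawn new ones, raising $\beta$ by an amount proportional to the number of inserted segments. Writing $d_i$ for the number of segments of $R\cup B$ incident to $p_i$, we have $\sum_i d_i = O(n)$ (each segment has two endpoints), so the total increase of $\beta$ across the sweep is $O(n)$. Next I would charge each bundle--bundle intersection to a \emph{decrease} of $\beta$: when a blue bundle and an adjacent red bundle cross completely, their order along the sweepline is reversed, which brings each of them next to a same-color neighbor, and those same-color bundles are then merged, decreasing $\beta$. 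Hence the number of bundle--bundle intersections is bounded by the number of merges, which is bounded by the total increase of $\beta$, i.e.\ $O(n)$. The correctness of \emph{when} a crossing is processed---that it is handled exactly at the event equal to its witness, neither too early nor too late---is precisely what Lemma~\ref{lemma:pseudoline} guarantees, so the charging is applied to a well-defined sequence of events. I expect the delicate point to be making this charging airtight: verifying that every processed crossing is genuinely accompanied by a bundle-count-decreasing merge (handling the degenerate case where the two crossing bundles are the only ones on the sweepline and no neighbor is available to merge with), and that no merge is charged twice.

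Finally, for the time bound I would account for cost per event and per bundle--bundle intersection. Sorting the $n'$ endpoints by $x$-coordinate and initializing the structures costs $O(n\log n)$. There are $n' \le n$ events; processing $p_i$ requires locating it among the bundles and performing $O(d_i)$ insertions, deletions, splits, and merges, each realized by a constant number of operations on balanced binary trees and the bundle list at $O(\log n)$ cost apiece. By (ii) there are $O(n)$ bundle--bundle intersections, and handling each costs a constant number of tree splits/merges and list updates, again $O(\log n)$ each. Summing, the total time is $O((n + \sum_i d_i)\log n) = O(n\log n)$, which together with (i) completes the proof.
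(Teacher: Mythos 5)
The paper does not actually prove this statement: it is imported verbatim from Mantler and Snoeyink~\cite{ms01} (note the attribution in the theorem header), and Section~2.1 only summarizes their algorithm. So there is no in-paper proof to compare against; your attempt has to be judged as a reconstruction of the argument in~\cite{ms01}, and as such it follows the standard analysis. The decomposition is right: the space bound from the fact that bundles partition the segments currently on the sweepline; the $O(n)$ bound on bundle--bundle intersections by bounding the total \emph{increase} of the bundle count $\beta$ by $O(n)$ (only insertions at events split bundles, by a constant per inserted segment, and deletions never increase $\beta$) and charging each swap to the merge(s) it forces via the color-alternation invariant; and the time bound by multiplying the $O(n)$ structural operations by $O(\log n)$ per balanced-tree operation.

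The one genuine gap is the boundary case you flagged but did not close, and it does need closing because as written the inequality ``number of swaps $\le$ number of merges'' is false. The correct statement is: when adjacent bundles $B$ (blue) and $R$ (red) swap, alternation forces the outer neighbors (if they exist) to be red below $B$ and blue above $R$, so the swap forces one merge per existing outer neighbor; hence a swap forces \emph{no} merge only when $B$ and $R$ are the only two bundles on the sweepline. Since each merge has a unique cause, the number of merge-forcing swaps is at most the number of merges, which is at most the total increase of $\beta$, i.e.\ $O(n)$. For the merge-free swaps, observe that after the two lone bundles fully cross they cannot cross again, and no new bundle appears until the endpoint of the current event is processed, so at most one merge-free swap occurs per event, contributing another $O(n)$. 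Adding that sentence (together with the explicit check that deletions only ever decrease $\beta$, which your phrase ``can only grow when\ldots'' asserts but does not verify) makes the charging airtight and the proof complete.
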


\subsection{Building the life table}
\label{sec:ds}

In this section we describe our algorithm to build the life table for the sets $R$ and $B$. 
Figure~\ref{fig:exe} illustrates the execution of the algorithm for a simple example. 

Before we start our description,  recall~\cite{dss86} that every pointer-based
data structure
with constant in-degree can be transformed into a partially persistent one.
Such a persistent data structure allows accessing in constant
time the data structure at any moment in the past, and performing
pointer operations
on it (but not modifying it); the total time and space required to
make a data structure
partially persistent is linear in the number of structural changes it underwent.

To build the life table for $R$ and $B$, we first perform the Mantler-Snoeyink plane sweep algorithm (see Section~\ref{sec:mantler-snoyeink}), 
 making the  sweepline status structure {partially persistent}.
This ensures that each blue bundle that has 
appeared during the algorithm, 
can afterwards be retrieved from the version of the sweepline status 
at the corresponding 
moment in the past. In particular, 
we are interested in the blue bundles that intersect red bundles. 
We assign each such blue bundle $B_i$ 
a \emph{timestamp} $t_i$ 
reflecting the moment when the first bundle-bundle intersection involving $B_i$ was witnessed.
In order to distinguish between two different bundle-bundle intersections 
discovered at the same moment $t_k$ (i.e., witnessed by the same point), 
we assign the moment 
 $t_k+\epsilon$ to the intersection
that has  smaller $y$ coordinate. 
Figure~\ref{fig:exe}, right, lists all such  blue bundles for the given example.   

Observe that the plane sweep algorithm induces a partial order among the red segments:   
At any moment, the red segments crossed by the sweepline can be ordered from bottom to top.
Since the  red segments are pairwise non-intersecting, 
no two segments may swap their relative position.  
Let $r_1, \ldots, r_n$ be a total order   consistent with the  partial order along the sweepline at each moment.
In Figure~\ref{fig:exe}, the red segments are named according to such an order.

We now build the \emph{life table} of red segments and blue bundles,  
see Figure~\ref{fig:exe}, bottom. 
The life table is a graph defined as follows. 
On its $y$ axis it has integers from $0$ to $n_R$, where $n_R$ is the number of red 
segments;
the  $x$ axis of the life table coincides with the $x$ axis of the original setting, i.e., of the plane $\mathbb{R}^2$.   
Each red segment $r_i$ is represented by a horizontal line segment whose $y$ coordinate  equals $i$  
and whose endpoints' $x$ coordinates coincide with the $x$ coordinates of  
the endpoints of $r_i$. 
Each blue bundle $B_j$, that has  participated in at least one bundle-bundle intersection, 
is retrieved from the version of the sweepline status at the moment $t_j$
 when the first such intersection has been witnessed; $t_j$ is the timestamp of $B_j$. 
In the table, 
$B_j$ is represented by a vertical 
line segment (that could possibly be a point), whose $x$ coordinate is $t_j$.  
This vertical segment intersects 
exactly  the segments representing all red segments intersected by 
 bundle $B_j$ (i.e., the segments of the red bundle(s) participating in the bundle-bundle intersection(s) with $B_j$). 
In particular, the bottom and the top  endpoints of this segment lie  respectively on the two red segments that represent 
 the  first and the last segment in $R$ intersected by bundle $B_j$, 
according to the topological ordering of the red  segments. 
If $B_j$ intersects only one red segment, then in the life table $B_j$ is represented by a point.
In Figure~\ref{fig:exe} all the blue bundles except $B_5$ are represented by a point, but  
in a more complicated example many bundles might
 be represented by  line segments. Note that 
instead of storing the segment list of each blue bundle explicitly, 
we just maintain a pointer to that bundle as it appears  
in (the corresponding version of) the sweepline status structure.

\begin{figure}[h]
\centering
\includegraphics{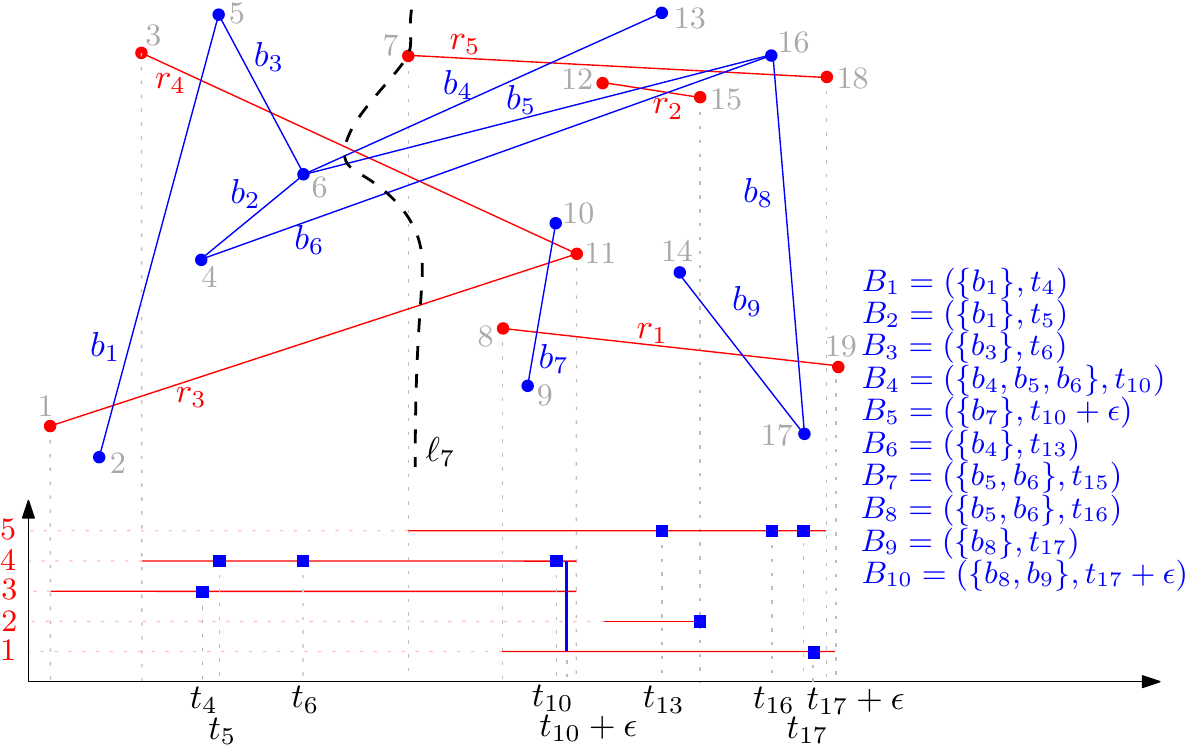}
\caption{Execution  of the algorithm from Section~\ref{sec:ds} 
for $R = \{r_1,\ldots,r_5\}$ and $B = \{b_1,\ldots,b_{10}\}$. Above: the segments in $R$ (red) and $B$ (blue); 
the events of the plane sweep in order (gray numbers), and the pseudoline $\ell_7$ at time $7$ (dashed line).
Right: the set of blue bundles $\{B_1,\ldots,B_{10}\}$  encountered by the algorithm.
Below: the life table for $R$ and $\{B_1,\ldots,B_{10}\}$.}
\label{fig:exe}
\end{figure}

\subsection{The resulting  data structure}
\label{sec:resulting-ds}

After the life table is built, we sweep it with a horizontal straight line from bottom  to top, 
again making the sweepline status partially persistent. 
The events now correspond to red segments, and the version of the 
sweepline status at a time moment $i$ contains all blue bundles crossing the horizontal line $y = i$, 
sorted by $x$ coordinate and stored in a balanced binary tree. 

Our ultimate data structure is the persistent sweepline status of the above (second) plane sweep.  
We are  able, in $O(1)$ time,  to retrieve the version of the sweepline status structure at any moment $i$. 
The sweepline status at the moment $i$ is a tree storing the  blue bundles
whose beginning was witnessed before the moment $i$ and whose end was witnessed after that moment 
(in other words, those blue bundles that intersect the horizontal line $y = i$ in the life table). See Figure~\ref{fig:exe}. 
Since each single bundle is stored in a balanced binary tree, the tree of bundles is also a balanced binary tree. 
Therefore it has height $O(\log{n})$. Moreover,  it can be accessed  in the same way as a standard binary tree: 
any navigation query in it (see Problem~\ref{prob:rb}) can be performed in  $O(1)$ time. 

Now suppose we wish to perform a 
search on (a portion of) a red segment $r_i$ among the blue segments that cross it. 
We are required
to be able to quickly determine  where the interesting intersection(s) lie with respect to $p$,  
for any point $p$ of intersection  between $r_i$ and a blue segment. 
The search in our data structure is then done as follows. 
We retrieve the version of the sweepline status (of the second plane sweep) at the  moment $i$. 
This sweepline status is an implicit balanced binary tree, as explained above. 
We locate the endpoints of  $r_i$ in that tree. Then we 
search in the portion of the tree between $r_i$'s endpoints. 
The decisions during the search are made based on our knowledge about the interesting intersections.

We conclude with the following.  

\begin{theorem}
\label{thm:rb-search}
Given a pair $R, B$ of plane straight-line graphs with $n$ 
edges and vertices in total in both graphs, 
the \theProblem { } for $R$ and $B$ can be solved
 in $O(n\log{n})$ time and $O(n)$ space, 
such that 
the resulting data structure answers each of the {navigation queries} in  $O(1)$ time. 
\end{theorem}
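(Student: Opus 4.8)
The statement collects the guarantees of the three-phase pipeline of Sections~\ref{sec:mantler-snoyeink}--\ref{sec:resulting-ds}, so the plan is to verify its three bounds---time $O(n\log n)$, space $O(n)$, and $O(1)$ per navigation query---phase by phase. For the time bound, the first phase is the Mantler--Snoeyink sweep, which by the theorem of Section~\ref{sec:mantler-snoyeink} runs in $O(n\log n)$ time and, crucially, witnesses only $O(n)$ bundle--bundle intersections. Consequently the life table of Section~\ref{sec:ds} has $O(n)$ horizontal segments (the red segments) and $O(n)$ vertical segments (one per blue bundle meeting a red bundle, since distinct such bundles are charged to distinct first witnessed bundle--bundle intersections). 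The horizontal sweep of Section~\ref{sec:resulting-ds} thus processes $O(n)$ events, during which $O(n)$ bundles are inserted into and deleted from a balanced tree in total; at $O(\log n)$ per balanced-tree update this sweep also costs $O(n\log n)$, and computing the timestamps $t_j$ together with running both sweeps fits inside the same bound.

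For the per-query bound I would reason as follows. Making each sweepline status partially persistent~\cite{dss86} lets us fetch in $O(1)$ time its version at any past moment $i$; the version produced by the second sweep is a balanced tree whose nodes are the blue bundles crossing the line $y=i$, and each such bundle is itself a balanced tree of blue segments retrieved from the persistent first sweep. I would define the tree $T_e$ of Problem~\ref{prob:rb}, for $e=r_i$, by splicing each bundle's segment tree into the corresponding node of the bundle tree and storing at every splice point $O(1)$ auxiliary pointers linking the bundle level to the root of the attached segment level and back. As both levels are balanced, the spliced tree has height $O(\log n)$, and each navigation query---root, parent, or child, including the steps that cross between the two levels---follows a constant number of pointers in the appropriate persistent version and so takes $O(1)$ time. (The tree so obtained is balanced, not perfectly balanced, but this is all that the searches built on Problem~\ref{prob:rb} require.)

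The main obstacle is the space bound, because~\cite{dss86} charges space linear in the number of \emph{structural changes}, whereas a single balanced-tree rebalancing can naively trigger $\Theta(\log n)$ pointer writes---over $O(n)$ updates yielding only $O(n\log n)$. The plan is to keep the total number of structural changes linear. For the second sweep this is immediate once the bundle tree uses a balancing scheme with $O(1)$ amortized restructuring per insertion or deletion (such as red--black trees), so that its $O(n)$ updates induce $O(n)$ changes. The delicate part is the first sweep, where bundles are repeatedly \emph{split} and \emph{merged}: although $O(n)$ such operations occur, I would argue that the persistent snapshots actually referenced by the life table together occupy only $O(n)$ space, charging each blue segment for the $O(1)$ amortized bundle operations that alter the stored structure on its behalf. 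Carrying out this amortized accounting, and separately confirming that the spliced two-level tree really is a search tree on the intersections in their order along $e$ (so that the bundle order and the within-bundle order agree with the order along $e$), is where the substance of the proof lies; the stated bounds then follow by inspection of the pipeline.
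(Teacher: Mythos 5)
Your proposal follows essentially the same route as the paper's own proof: run the Mantler--Snoeyink sweep with a partially persistent sweepline status, build the life table, sweep it horizontally with a second persistent status, and answer navigation queries by fetching the version at time $i$, which is a balanced two-level tree (bundles of bundles) of height $O(\log n)$. You are in fact more explicit than the paper about the two points it glosses over---that the spliced bundle-of-bundles structure orders intersections consistently with their order along $e$, and that the $O(n)$ space bound for persistence requires the number of \emph{structural changes} (not just update operations, including the splits and merges of bundle trees) to be kept linear rather than $O(n\log n)$---whereas the paper simply asserts amortized $O(1)$ overhead per update citing the persistence result and the $O(n)$ update count of Mantler--Snoeyink.
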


\begin{proof}
The first phase of our procedure to build the data structure is an execution of the Mantler-Snoeyink algorithm, 
with the only difference that the bundle trees from the sweepline status are made partially persistent. 
The latter can be performed with amortized  $O(1)$ time and space overhead
 per update step and a worst-case time cost of $O(1)$ per access step~\cite{dss86}. The total 
number of updates made to the sweepline status 
during the course of the Mantler-Snoeyink algorithm is $O(n)$~\cite{ms01}.
Thus, after the first phase is completed, we have  the order of red segments and the persistent sweepline status. 
With this information, the life table can be built in $O(n)$  time and space: 
we fill the table with the horizontal red segments, and we access sequentially all 
the versions of the sweepline, retrieving the blue bundles and the information on their intersections with the red bundles, and drawing the vertical segments of the life table. 
 Sweeping the life table with a horizontal line, and making the sweepline status partially persistent 
 again costs $O(n\log{n})$ time and $O(n)$ space.

For an edge $e=r_i$ of $R$, the version of the  sweepline status structure at time $i$  provides a balanced binary search tree 
 $T_e$, required by the \theProblem, such that $T_e$ can be navigated (but not modified) in the same way and with the same time complexity as the 
standard balanced binary search tree. Hence the \emph{navigation queries} of the \theProblem { } can be answered in $O(1)$ each. \end{proof} 

\section{Applications}
\label{sec:appl}

We proceed with more detail on the applications of our technique, which are listed in Section~\ref{sec:intro}.

\subsection{The red-blue batched search problem}
\label{sec:batched}

Consider two sets, $R$ (red) and $B$ (blue),  of line segments in $\mathbb{R}^2$,  
such that the  segments in each set are pairwise interior-disjoint, and suppose 
that some of the red-blue intersections are \emph{interesting}, and there is at most one interesting red-blue intersection per each segment. 
Let $\oracle$ be an oracle that, given an intersection point $p$ of a red segment $r$ and a blue segment $b$, determines
 to which side of $p$ the {interesting intersection} on $r$ lies. 

\begin{problem}[Red-blue batched search problem~\cite{dmt06}]
\label{prob:batched}
Given sets $R, B$ and oracle $\oracle$,
find all interesting intersections between the  segments in $R$ and the ones in $B$. 
\end{problem}

Dehne et~al.~\cite{dmt06} showed how to solve the red-blue batched search problem 
by using an augmentation of  the hereditary segment tree data structure of Chazelle et al.~\cite{CEGS94}. 
Their solution requires $O(n\log^2{n})$ space and  $O(n\log^3{n})$  
queries to the oracle. 

Our technique presented in Section~\ref{sec:ds} can be directly applied to solve the red-blue batched search problem with better time and space:
We preprocess the sets $R$ and $B$; after that for each red segment $r$ we perform a binary search in the (implicit) tree storing the red-blue intersections along $r$. 
The search is guided by the oracle $\oracle$, and thus it requires $O(\log{n})$ queries to the oracle. 
Since the number of red segments is $O(n)$, the total number of queries to the oracle 
required for searching all red edges is $O(n\log{n})$. Theorem~\ref{thm:rb-search} implies the following. 

\begin{theorem}
\label{thm:batched}
The red-blue batched search problem for the sets $R,B$ and the oracle $\oracle$
 can be solved using $O(n)$ space and $O(n\log{n})$ queries to the oracle, where $n$ is the total number of segments in $R$ and $B$. 
\end{theorem}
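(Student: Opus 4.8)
The plan is to derive the result as a direct consequence of Theorem~\ref{thm:rb-search}. First I would preprocess the pair $R, B$ with the data structure of that theorem, spending $O(n\log n)$ time and $O(n)$ space. This yields, for every red segment $r$, implicit access to the perfectly balanced binary search tree $T_r$ that holds the intersections of $r$ with the blue segments in the order in which they appear along $r$, where each navigation query (return the root, the parent of a node, or the left/right child of a node) is answered in $O(1)$ time. With this in hand, the problem reduces to one guided binary search per red segment.

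Concretely, for each red segment $r$ in turn I would run a single \emph{oracle-guided} search on $T_r$ to locate its (at most one) \emph{interesting} intersection. Starting from the root, at each visited node I would retrieve the blue segment $b$ indexed there, compute in $O(1)$ time the intersection point $p = r \cap b$, and invoke $\oracle(p)$. The oracle reports on which side of $p$ along $r$ the interesting intersection lies, and I would descend to the corresponding child, repeating until the search either reaches the interesting intersection or exhausts a root-to-leaf path (reporting ``none'' in the latter case). Since each expansion costs one $O(1)$ navigation query plus one oracle query, and $T_r$ is perfectly balanced and hence of height $O(\log n)$ even when $r$ meets $\Theta(n)$ blue segments, searching a single red segment costs $O(\log n)$ oracle queries. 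Each interesting intersection lies on a unique red segment, so collecting the outcomes over all red segments yields exactly the set of all interesting red-blue intersections, as required by Problem~\ref{prob:batched}.

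Summing over the $O(n)$ red segments gives $O(n\log n)$ oracle queries in total, and the working space beyond the $O(n)$-space preprocessed structure is only $O(\log n)$ per search and can be reused, so the overall space remains $O(n)$. The one point I would take care to justify is correctness of the guided search: the hypothesis that each segment carries at most one interesting intersection makes the predicate ``the interesting intersection lies to one fixed side of $p$'' monotone along the sorted order stored in $T_r$, so an ordinary binary search converges to it, and the no-interesting-intersection case is handled by the search terminating without a report. This verification is the main thing to check, but it is routine rather than a genuine obstacle, since all the heavy lifting has already been carried out by the preprocessing of Theorem~\ref{thm:rb-search}.
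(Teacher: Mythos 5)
Your proposal is correct and follows essentially the same route as the paper: preprocess $R,B$ via Theorem~\ref{thm:rb-search}, then run one oracle-guided binary search of depth $O(\log n)$ on the implicit tree $T_r$ for each of the $O(n)$ red segments, giving $O(n\log n)$ oracle queries and $O(n)$ space. Your added remark on why the oracle's answers make the search well-founded is a detail the paper leaves implicit, but it does not change the argument.
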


\subsection{Vertical distance for a pair of 3D polyhedral terrains, one of which is convex}
\label{sec:distance}

Let $R$ and $B$ be two polyhedral terrains of complexity $n_R$ and $n_B$ respectively, where terrain $B$ is convex 
(that is, $B$ is the upper envelope of a set of $n_B$ planes in $\mathbb{R}^3$). Let $n = n_R + n_B$. 
We wish to determine the maximum vertical distance between $R$ and $B$, i.e., the length of the longest 
vertical line segment connecting a point in $R$ and a point in $B$.\footnote{It may happen, that the distance keeps increasing as we move along
some direction towards infinity. Then we say that the maximum vertical distance between $R$ and $B$ is $+\infty$.}
As an illustration, the reader may imagine a (convex) approximate model of a mountain, and a need to compare it with 
the real mountain (of course, not necessarily convex) in order to estimate the quality of the approximation. 

Since  both surfaces $R$ and $B$ are composed of planar patches, the vertical distance between $R$ and $B$
is the vertical distance between these patches. 
The maximum vertical distance between $R$ and $B$ is thus attained either between a vertex of 
$R$ and a facet of $B$ (or vice versa), at infinity along an unbounded edge of one of the surfaces, or between an edge  of $R$ and an edge of $B$.
In the second case the distance between $R$ and $B$ is $+\infty$. Both the first and the second case 
can be easily processed by point location 
queries of a (possibly infinite) vertex of one surface into the other one, which requires $O(n\log{n})$ time in total. 
To deal with the last case one can preprocess the vertical projections of $R$ and $B$ following our technique, and perform the binary searches along each edge $e$ of $R$ for the intersection with an edge of $B$ maximizing the vertical distance.  
Consider the cross-section of $B$ by the vertical plane containing $e$. This is a convex monotone polygonal line. 
The sequence $h_B$ of heights 
of its breakpoints is unimodal, and since all points of $e$ lie on the same line, if we subtract  from each member of $h_B$ the 
height of the point in $e$ lying on the same vertical line, the resulting sequence will still have one maximum. It then follows that given a point $p \in e$ vertically above/below an edge of $B$, in constant time we can find out in which direction this maximum lies, and this is exactly 
what the oracle for the binary search  along $e$ should do. Using Theorem~\ref{thm:rb-search}, we conclude. 

\begin{theorem}
\label{thm:vert}
Given a pair 
of polyhedral terrains in 3D,  
where one of the terrains is convex, 
the maximum vertical distance between the terrains can be found 
in $O(n\log{n})$ time and $O(n)$ space, where $n$ is the total complexity of both terrains.  
\end{theorem}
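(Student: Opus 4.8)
The plan is to reduce the computation of the maximum vertical distance to a collection of red-blue searches of the kind handled by Theorem~\ref{thm:rb-search}, together with one batch of point-location queries. First I would classify where the optimum can occur. Writing $f_R$ and $f_B$ for the height functions of the two terrains over their common domain in the $xy$-plane, the maximum vertical distance equals $\max |f_R - f_B|$, and since $f_R - f_B$ is piecewise linear over the overlay of the two projected subdivisions, its extrema occur only at vertices of that overlay. These are of three kinds: a vertex of one terrain lying over a facet of the other, an unbounded direction along which the gap may tend to $+\infty$, and a crossing between a projected edge of $R$ and a projected edge of $B$. The first kind, together with the detection of unbounded growth, I would resolve by building a point-location structure on each projected terrain and locating every (possibly infinite) vertex of one terrain in the other; this costs $O(n\log n)$ time and $O(n)$ space in total.

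Second, for the edge-edge crossings I would invoke the preprocessing technique. The vertical projections of the edges of $R$ form a plane straight-line graph, as do those of $B$; taking the former as the red graph and the latter as the blue graph places us exactly in the setting of Theorem~\ref{thm:rb-search}. I would preprocess this pair in $O(n\log n)$ time and $O(n)$ space, obtaining for every red edge $e$ an implicit balanced search tree over its crossings with blue edges. It then remains to define the \emph{interesting} crossing along $e$ and an $O(1)$ oracle that steers the binary search toward it.

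The crucial use of convexity enters here. Because $B$ is the upper envelope of planes, $f_B$ is a convex function, so its restriction to the vertical plane through $e$ is a convex piecewise-linear chain; subtracting the linear height of $e$ leaves $g := f_R - f_B$, restricted to $e$, concave. A concave piecewise-linear function is unimodal, so $g$ has a single maximum along $e$, and at any crossing point $p$---where I know the slopes of the two blue edges meeting the vertical plane---I can decide in $O(1)$ time whether $g$ is still increasing or already decreasing, i.e.\ on which side of $p$ the peak lies. Declaring the interior maximizer of $g$ to be the interesting crossing, the oracle is exactly this slope comparison, and the search along $e$ uses $O(\log n)$ oracle calls. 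The opposite sign, $f_B - f_R$, is convex and hence maximized at an endpoint of $e$, that is, at a vertex already examined in the point-location phase, so nothing is lost. Summing over the $O(n)$ red edges gives $O(n\log n)$ oracle calls, and combining with point location yields the claimed $O(n\log n)$ time and $O(n)$ space.

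The main obstacle I expect is not the time bound but the correctness of this case split: I must argue that taking the maximum over (i) the concave interior maxima found by the searches and (ii) all vertex-over-facet values found by point location genuinely recovers $\max|f_R-f_B|$, and that unboundedness along edges is caught. The delicate point is that $|f_R - f_B|$ is not itself unimodal along $e$; it is precisely the decomposition into the concave part $f_R - f_B$, whose interior maximum is an edge-edge crossing, and the convex part $f_B - f_R$, whose maximum sits at a vertex, that makes both the unimodal oracle and the point-location step sound. Verifying that the oracle's slope test is well defined at every crossing, including degeneracies where $e$ passes through a breakpoint of $B$'s cross-section, is the remaining piece of routine bookkeeping.
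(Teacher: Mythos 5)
Your proposal is correct and follows essentially the same route as the paper: point location for the vertex-over-facet and unbounded cases, and the red-blue preprocessing of Theorem~\ref{thm:rb-search} with a unimodality oracle derived from the convexity of $B$'s cross-section for the edge-edge case. In fact you are slightly more explicit than the paper about why the non-unimodality of $|f_R-f_B|$ is harmless, namely that the convex part $f_B-f_R$ attains its maximum at an endpoint of $e$ and is therefore already covered by the point-location phase.
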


Notice that by slightly changing the algorithm, we could be answering the minimum vertical distance, instead of the maximum one. 
In particular this gives an alternative $O(n\log{n})$ algorithm to solve the shortest watchtower problem~\cite{S88,Z97}.

\subsection{Construction of the Hausdorff Voronoi diagram}
\label{sec:hvd}
Given a set of  $n$ distinct points in the plane, we partition this set,  
resulting in a family  $S$ 
of point clusters, where no two clusters share a point.
Let the distance from a point $t \in \mathbb{R}^2$ to a cluster $P \in S$, denoted as $\df{t,P}$,
 be the maximum Euclidean distance
from $t$ to any point in  $P$. 
 The \emph{Hausdorff Voronoi diagram} of $S$, denoted as $\HVD{S}$,
is a subdivision of $\mathbb{R}^2$ into maximal regions such that every point within 
one region has the same nearest cluster according to distance $\df{\cdot,\cdot}$.  

The diagram has  
worst-case combinatorial complexity $\Theta(n^2)$, and it can be constructed in optimal $O(n^2)$ time~\cite{EGS89}. 
However, these bounds can be refined according to certain parameters of the family  $S$. 
Two clusters are called \emph{non-crossing} if their convex hulls
intersect at most twice (i.e., their convex hulls are pseudocircles), and \emph{crossing} otherwise.
Below we consider separately the (simpler) case of non-crossing clusters, and the one of crossing clusters. The latter case subsumes  the former one. 

\paragraph{Non-crossing clusters.}  
If all clusters in $S$ are pairwise non-crossing, the complexity of $\HVD{S}$ is $O(n)$. 
In this case the diagram can be constructed in expected $O(n\log^2{n})$ time and expected $O(n)$ space~\cite{CKLP16,kp16a}. 
The best deterministic algorithm to date requires  
$O(n\log^5{n})$ time and $O(n\log^2{n})$ space~\cite{dmt06}.\footnote{The time complexity claimed in Dehne~et al. is $O(n\log^4{n})$, however we 
believe that in reality the described algorithm requires $O(n\log^5{n})$ time.  
} 
The latter algorithm follows the divide-and-conquer strategy. 
To merge two recursively computed diagrams, a bottleneck procedure is formulated as
a red-blue batched segment search problem (see Section~\ref{sec:batched}), 
where the two segment sets are the sets of edges of the two diagrams. 
The latter problem is then solved in $O(n\log^2{n})$ space and $O(n\log^3 n)$ queries to the oracle.
The 
authors define an oracle to perform this search, which they assume can be implemented in $O(1)$ time. 
We were unable to reconstruct the claimed constant-time oracle, 
however we know how to implement it in $O(\log{n})$ time per query. 
Theorem~\ref{thm:batched} implies an algorithm to construct the Hausdorff Voronoi diagram of a family of non-crossing clusters 
in $O(n\log^3n)$ time and $O(n)$ space.\footnote{Note that if it was possible to implement the oracle in $O(1)$ time, 
our algorithm would instantly be improved by a $O(\log{n})$ factor. Thus in all cases 
our algorithm is faster than the previous one by a factor of $O(\log^2 n)$.} This result is 
subsumed by  
the one for arbitrary clusters, 
which is strictly more general 
(for a family of non-crossing clusters $m=0$).

\paragraph{Arbitrary clusters.}
Consider the Hausdorff Voronoi diagram of a family $S$ of  arbitrary clusters, that could possibly cross. 
The essential parameter used to refine the quadratic bounds related to the diagram in that case, is the \emph{number of crossings},\footnote{See~\cite{p04,pl04}
for the formal definition of the number of crossings.} 
denoted by $m$. The parameter $m$ is bounded from above by half
 the number of intersections  between the convex hulls of all pairs of crossing clusters.
  In the worst case $m = \Theta(n^2)$, 
however it is  small in known  practical 
applications, e.g., in VLSI CAD~\cite{p04,pl04}.
The combinatorial complexity of the Hausdorff Voronoi diagram is shown to be $O(n+m)$~\cite{p04}. 
Apart from the $O(n^2)$ time algorithm mentioned above, there is a plane sweep~\cite{p04} and a divide-and-conquer~\cite{pl04} 
algorithm to construct the diagram. 
Both of them are sensitive to the parameter $m$, however their time complexity depends as well on some other parameters, which are
unrelated to $m$. 
In particular, these algorithms may have $\Omega(n^2)$  time complexity even when clusters are non-crossing. 

Our technique can be applied to reduce the
time complexity of the divide-and-conquer construction of 
the Hausdorff Voronoi diagram of arbitrary clusters~\cite{pl04}. The resulting algorithm is the fastest to date deterministic algorithm for certain
input  families, where clusters may cross, but the number of crossings is small. 
Since the description of our algorithm requires a lot of additional definitions and details, we defer it to Section~\ref{sec:hvd-arb}.

\subsection{Construction of the farthest-color Voronoi diagram}
\label{sec:fcvd}
Let $S$ be a family of clusters of points in the plane, induced by partitioning a set of $n$ points in the plane -- the same setting 
as in the previous section. 
Here we consider the \emph{farthest-color Voronoi diagram}~\cite{hks1993dcg,ahiklmps01} of $S$, a 
generalized cluster Voronoi diagram, which is in some sense the opposite to the Hausdorff Voronoi diagram. 
It is defined as a subdivision of the plane into maximal regions, 
such that for any point in the region $\fcreg{P}$, $P \in S$,  
the  cluster $P$ is the farthest cluster in $S$. 
The distance from a point $t \in \mathbb{R}^2$ to a cluster $P \in S$ is the minimum distance from $t$ to a point in $P$. 
In  $\FCVD{S}$, 
each $\fcreg{P}$  is subdivided into finer regions of the points of $P$ 
 by the nearest-neighbor Voronoi diagram 
 of the points of that cluster, $\VD{P}$. 
The edges of this additional subdivision are called \emph{internal edges}. 

The farthest-color Voronoi diagram is much less understood than the Hausdorff diagram. 
Its combinatorial complexity is $O(nk)$, and a matching lower bound is known for $k \leq n/2$~\cite{ahiklmps01}, where $k$ is the number of clusters in $S$. 
An $O(n^2)$ time construction algorithm is implied by the result of Edelsbrunner et al.~\cite{EGS89}. 
Since the diagram is not an instance of \emph{farthest abstract Voronoi diagrams}~\cite{M01}, studying 
particular families of clusters (proving better bounds on the complexity of the diagram, and finding faster construction algorithms)
is a non-trivial task.
Recently it was shown that if the clusters in $S$ are 
pairs of endpoints of disjoint segments parallel to each other, 
$\FCVD{S}$ has $O(n)$ complexity and can be computed in optimal $O(n\log{n})$ time~\cite{CKPSS17}.

Applying our technique, we are able to claim near-optimal time complexity for a larger class of input families, where each 
cluster in a family is a quadruple of points that are corners of an axis-aligned rectangle, and the corresponding rectangles 
are pairwise disjoint.
We first need to show that the farthest-color Voronoi diagram of such an input has linear combinatorial complexity, see Proposition~\ref{prop:fcvd-compl} below.
To prove this, we make use of the following.

\begin{figure}
\centering
\includegraphics{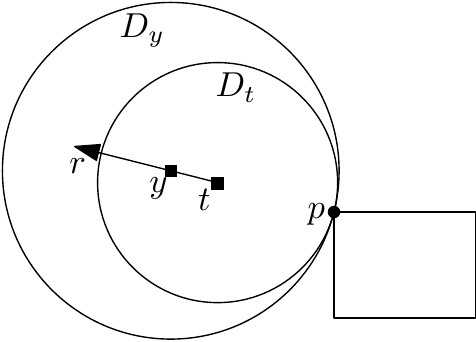}
\caption{Illustration for the proof of Lemma~\ref{lemma:fcvd-bounded}}
\label{fig:disks}
\end{figure}

\begin{lemma}
\label{lemma:fcvd-bounded} 
Let $f$ be a bounded face of $\FCVD{S}$. 
Then $f$ is incident to an internal edge of $\FCVD{S}$. 
\end{lemma}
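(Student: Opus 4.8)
The plan is to argue by contradiction. Fix a bounded face $f$ and let $P$ be the cluster that is farthest on $f$; since $f$ is a single cell of the refined diagram, the nearest point of $P$ is a fixed vertex $p\in P$ throughout $f$, and the value function there is $t\mapsto\|t-p\|$. Suppose, for contradiction, that $\partial f$ contains no internal edge. Then every edge of $\partial f$ is an \emph{external} edge, separating $f$ from a face where some other cluster $Q$ is farthest; on such an edge one has $\|t-p\|=\|t-q\|$, where $q$ is the nearest point of $Q$, so the edge lies on the perpendicular bisector $b(p,q)$ and $f$ lies on the side closer to $q$. In particular all edges of $f$ are straight segments, so $\bar f$ is a bounded polygon. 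I would first record the easy fact that $p\notin\bar f$: at $p$ we have $\df{p,P}=0$ while $\df{p,Q}>0$ for every other cluster $Q$, so by continuity $P$ is not farthest in a neighborhood of $p$.

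Next I would examine where the strictly convex function $\|t-p\|$ attains its maximum over the compact polygon $\bar f$. Because a strictly convex function attains its maximum over a compact set at an extreme point, this maximizer $w$ is a convex vertex of $f$ where two external edges meet; by the previous paragraph these lie on bisectors $b(p,q_1)$ and $b(p,q_2)$, so $p,q_1,q_2$ all lie on the circle of radius $r_w:=\|w-p\|>0$ centered at $w$ (this is exactly the circle in Figure~\ref{fig:disks}). The first-order optimality condition for maximizing $\|t-p\|$ over the local convex wedge at $w$ states that the gradient $(w-p)/r_w$ lies in the normal cone of the wedge; since the outward normals of the two bounding bisectors are the directions $p-q_1$ and $p-q_2$, this reads $w-p=\mu_1(p-q_1)+\mu_2(p-q_2)$ for some $\mu_1,\mu_2\ge 0$.

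Finally I would extract the contradiction from this relation. Rewriting it as $p-w=\mu_1(q_1-p)+\mu_2(q_2-p)$ and taking the inner product with $p-w$ gives $r_w^2=\mu_1\langle q_1-p,\,p-w\rangle+\mu_2\langle q_2-p,\,p-w\rangle$. Using $q_i-p=(q_i-w)-(p-w)$ together with Cauchy--Schwarz on the circle, $\langle q_i-p,\,p-w\rangle=\langle q_i-w,\,p-w\rangle-r_w^2\le 0$, and in fact strictly negative since $q_i\neq p$ (the clusters are disjoint); geometrically, the chord from $p$ to another point of the circle makes a non-acute angle with the outward radius at $p$. Hence the right-hand side is $\le 0$ while $r_w^2>0$, a contradiction, so $\partial f$ must contain an internal edge.

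The main obstacle — the step I would be most careful about — is the geometric bookkeeping that converts the abstract optimality condition into the usable vector identity $w-p=\mu_1(p-q_1)+\mu_2(p-q_2)$: correctly identifying the outward normals of the bounding bisectors, verifying that the maximizer is genuinely a convex vertex incident to \emph{two external} edges (and not, say, an edge-interior point, which strict convexity excludes), and confirming $r_w>0$ via $p\notin\bar f$. Once that identity is in hand, the sign computation on the circle is immediate and closes the argument.
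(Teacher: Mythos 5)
Your proof is correct, but it takes a genuinely different route from the paper's. The paper argues directly and constructively: for a face $f$ of $\fcreg{p}$ with $p\in P$, it picks any $t\in f$, shoots the ray $r$ from $t$ in direction $\overrightarrow{pt}$, and observes that the disk centered at $y\in r$ and passing through $p$ only grows as $y$ moves along $r$, hence keeps meeting every cluster; so $y$ stays in $\fcreg{p}$ until $r$ crosses an edge of $\VD{P}$, and since $f$ is bounded the ray must leave $f$, necessarily through an internal edge. Your argument is a variational counterpart by contradiction: assuming $\partial f$ carries no internal edge, you take a maximizer $w$ of $\|t-p\|$ over the compact set $\bar f$, show it is a convex vertex where two edges on bisectors $b(p,q_1)$ and $b(p,q_2)$ meet, and contradict the first-order optimality condition using the fact that $p,q_1,q_2$ lie on a common circle centered at $w$. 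Both proofs rest on the same geometric fact --- that $\df{\cdot,P}=\|\cdot-p\|$ cannot have a local maximum on the non-internal part of $\partial f$ --- but yours needs more bookkeeping: the reduction to an extreme point of the convex hull of $\bar f$ (which you handle correctly, and which matters since $f$ need not be convex), the identification of the outward normals $p-q_i$, and the sign computation $\langle q_i-p,\,p-w\rangle<0$ (valid because $q_i\neq p$ lie on the same circle about $w$, and $r_w>0$ because $p\notin\bar f$). The paper's ray argument reaches the same conclusion in a few lines and, unlike the contradiction argument, exhibits the incident internal edge explicitly, which is closer in spirit to how the lemma is used later in the counting argument of Proposition~\ref{prop:fcvd-compl}.
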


\begin{proof} 
The following proof is a simple generalization of one of the  arguments used to prove~\cite[Lemma~3]{CKPSS17}. 
Face $f$ is a face of $\fcreg{p}$ for some point $p\in P, P \in S$. 
Let $t$ be a point in $f$, and  let $r$ be the ray originating at $t$ and having direction $\overrightarrow{pt}$. 
See Figure~\ref{fig:disks}. 
Consider the closed disk $D_y$ centered at a point $y \in r$ and passing through  $p$. 
Since $t \in \fcreg{p}$, $D_t$ intersects all the clusters in $S$. 
As $y$ moves along $r$, the disk grows and it still intersects all clusters in $S$. 
Until $y$ hits an edge of $\VD{P}$, the minimum distance from $y$ to $P$  is attained by $p$, and since $D_y$ intersects 
all clusters in $S$, $y$ lies in $\fcreg{p}$. Thus either the whole ray $r$ is contained in $f$, or the first 
 intersection between $r$ and the boundary of $f$ lies on an internal edge of $\FCVD{S}$. Since we assumed $f$ to be bounded, 
the latter must hold. 
 \end{proof}

\begin{proposition}
\label{prop:fcvd-compl}
For a family $S$ of $n$ clusters, 
where each cluster is all the corners of an axis-aligned rectangle, and these rectangles are pairwise disjoint,
the combinatorial complexity of  
$\FCVD{S}$ is $O(n)$. 
\end{proposition}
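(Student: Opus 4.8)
The plan is to bound the total combinatorial complexity of $\FCVD{S}$ by counting its faces and invoking Euler's formula. Since $\FCVD{S}$ is a planar subdivision in which (in general position) every vertex has degree at least three, its numbers of vertices, edges, and faces are all within constant factors of one another; hence it suffices to show that the number of faces is $O(n)$. I would split the faces into unbounded and bounded ones and treat the two groups separately.

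For the unbounded faces I would analyze the behavior at infinity. As a point $t$ recedes to infinity in direction $\omega$, the min-distance $\df{t,P}$ to a cluster $P$ behaves like $|t|-h_P(\omega)$, where $h_P$ is the support function of the rectangle spanned by $P$; thus the farthest cluster in direction $\omega$ is the one minimizing $h_P(\omega)$. The cluster that is farthest at infinity is therefore governed by the lower envelope of the $n$ support functions $h_P$, each of which has $O(1)$ complexity and crosses any other only $O(1)$ times. This envelope has $O(n)$ complexity, so at most $O(n)$ clusters own an unbounded region; since each $\VD{P}$ refines a cluster's region into only $O(1)$ unbounded pieces, the number of unbounded faces is $O(n)$.

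The heart of the argument is the bounded faces, and here I would use Lemma~\ref{lemma:fcvd-bounded} together with a charging scheme. By the lemma, every bounded face is incident to at least one internal edge, i.e.\ an edge lying on $\VD{P}$ for the cluster $P$ owning that face. In a planar subdivision each internal edge is incident to exactly two faces, so charging every bounded face to one of its incident internal edges assigns at most two bounded faces to each internal edge. Consequently the number of bounded faces is at most twice the number of internal edges, and the whole estimate reduces to proving that the total number of internal edges is $O(n)$.

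To bound the internal edges I would exploit that each cluster is a rectangle, so $\VD{P}$ consists of just the horizontal and vertical lines through the rectangle's center and has $O(1)$ geometric complexity; the internal edges contributed by $P$ are the maximal arcs of these two lines lying inside $\fcreg{P}$. The hard part will be controlling the fragmentation: a priori a bisector line of $P$ could enter and leave $\fcreg{P}$ many times, and a naive estimate (intersecting the set $\{\df{\cdot,P}\ge\df{\cdot,Q}\}$ over all competing clusters $Q$) only bounds this by $O(n)$ per cluster. The key to ruling this out is the radial structure established inside the proof of Lemma~\ref{lemma:fcvd-bounded}: from any point of a finer region $\fcreg{p}$ the ray away from $p$ stays in $\fcreg{p}$ until it crosses $\VD{P}$, which makes each $\fcreg{p}$ radially monotone from $p$, hence connected and simply structured. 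Combining this radial monotonicity with the pairwise disjointness of the rectangles, I would argue that each bisector line meets $\fcreg{P}$ in only $O(1)$ arcs, so that $P$ contributes $O(1)$ internal edges and the total over all clusters is $O(n)$. Substituting the bounds on the bounded and unbounded faces back into Euler's formula then yields the claimed $O(n)$ bound. I expect this fragmentation step, and the precise way in which disjointness of the rectangles is exploited, to be the main technical obstacle.
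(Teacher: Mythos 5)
Your overall scaffolding matches the paper's: Euler's formula reduces everything to counting faces, Lemma~\ref{lemma:fcvd-bounded} plus the ``each internal edge bounds at most two faces'' charge handles the bounded faces, and the problem reduces to showing that the total number of internal edges is $O(n)$. But the step you yourself flag as the main technical obstacle --- that each bisector line of $\VD{P}$ meets $\fcreg{P}$ in $O(1)$ arcs --- is exactly the content of the proposition, and the idea you propose for it does not suffice. Radial monotonicity of the finer regions $\fcreg{p}$ (rays emanating from a single point $p$) controls how $\fcreg{p}$ looks from $p$; it says nothing direct about how many times a fixed line can re-enter $\fcreg{P}$. The paper's actual mechanism is a different growing-disk argument anchored at \emph{two} points: take an edge $e$ of $\VD{P}$ induced by corners $p,p'$, split it at the segment $pp'$, and slide the center $y$ of the circle through $p,p'$ along one half of $e$ so that the circle only grows. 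If some competitor cluster $Q$ stops meeting this circle (i.e., $y$ leaves $\fcreg{P}$), then $Q$ can meet a later, larger circle of the pencil again only if its convex hull crosses the chord $pp'$ --- and $pp'$ is a side of the rectangle of $P$, so this contradicts disjointness of the rectangles. Hence each half of each of the four edges of $\VD{P}$ contributes at most one internal edge, giving $O(1)$ per cluster. Without this (or an equivalent) argument the fragmentation bound, and with it the whole bounded-face count, is missing.

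Your treatment of the unbounded faces also has two gaps. First, the claim that the lower envelope of the $n$ support functions has $O(n)$ complexity is unsubstantiated: from ``each pair crosses $O(1)$ times'' one only gets a Davenport--Schinzel bound $\lambda_s(n)$, which is slightly superlinear, so linearity would need an extra argument. Second, even granting that, concluding that ``at most $O(n)$ clusters own an unbounded region'' does not bound the number of unbounded \emph{faces}: a single cluster's region can a priori consist of many unbounded faces reaching infinity in the same direction, and the envelope at infinity cannot distinguish them. The paper avoids both issues by counting unbounded \emph{edges} instead, using the grid formed by the $2n$ bisector lines of all the $\VD{P}$: inside each grid cell $\FCVD{S}$ coincides with a farthest-point Voronoi diagram of $n$ points, the unbounded cells are $O(n)$ half-strips (with only four asymptotic directions) and four quarter-planes, and each contributes a controlled number of unbounded edges.
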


\begin{proof}
$\FCVD{S}$ is a plane graph, whose  each  vertex has degree at least three. 
Thus by Euler's formula it is enough to show that the number of faces of $\FCVD{S}$ is $O(n)$. 
We will treat separately its bounded and unbounded faces. 
We assume that each cluster has 
four 
 distinct points, as otherwise (when it has two or one point) the situation becomes simpler.

We first  show that the total number of  bounded faces of $\FCVD{S}$ is $O(n)$.
By Lemma~\ref{lemma:fcvd-bounded}, any bounded face $f$ of  $\FCVD{S}$
is incident to  an internal edge of $\FCVD{S}$. 
Recall that each internal edge of $\FCVD{S}$ 
is a portion of an edge of $\VD{P}$,
 that lies in $\fcreg{P}$,  for some cluster $P\in S$.
In other words, an edge of $\VD{P}$ intersects $\fcreg{P}$ in several connected components (line segments or rays), 
and each such connected component is an internal edge of $\FCVD{S}$. 
 In the next paragraph,  we will show that each edge of  $\VD{P}$ of any cluster $P \in S$ contributes at most two internal edges to $\FCVD{S}$.
This will imply that  the total number of internal edges in $\FCVD{S}$ is $O(n)$: 
Since each cluster $P \in S$ is a quadruple of points that are the corners of an axis-aligned rectangle, 
$\VD{P}$ has four edges 
(see Figure~\ref{fig:fcvd}a), and therefore 
the total number of edges in the nearest-neighbor Voronoi diagrams of all clusters in $S$ is $O(n)$.

Let $e$ be an edge of $\VD{P}$, 
and let $p,p' \in P$ be the points  that induce $e$, i.e., $e$ is a 
ray 
contained in the bisector of $p,p'$. See Figure~\ref{fig:fcvd}b.
The 
segment $pp'$ breaks $e$ into two portions; consider one of them, $e_r$; the other part is treated analogously. 
Suppose for the sake of contradiction that 
$e_r$ intersects 
 $\fcreg{P}$ in at least two connected components. 
Consider the circle passing through points $p,p'$ and whose center $y$ moves along $e_r$ in the direction of growing radius of the circle. 
When $y$ stops being in  $\fcreg{P}$, the circle stops containing a point of every cluster in $S$. That is, some cluster $Q$ starts to be outside 
the circle. Since the circle is growing, there is no way for $Q$ to restart intersecting the circle, unless the convex hull of $Q$ intersects the segment $pp'$ (see the dark-blue rectangles in Figure~\ref{fig:fcvd}b), 
which would contradict the disjointness of the convex hulls of the clusters in $S$. 

Now we estimate the number of unbounded faces of $\FCVD{S}$.
Consider the arrangement of all edges of the nearest-neighbor Voronoi diagrams of clusters in $S$. This is a grid made of $n$ vertical and $n$ horizontal lines.  
Within one cell of that grid, $\FCVD{S}$ coincides with the farthest Voronoi diagram of $n$ points (one point per cluster). 
The number of unbounded regions of $\FCVD{S}$ equals the number of unbounded edges of $\FCVD{S}$. 
That number is comprised of the total number of unbounded internal edges (it is $O(n)$ since each of them is a portion of a line forming the grid)
and the total number  of the unbounded edges of $\FCVD{S}$ within the unbounded cells of the grid. 
Now we observe that such cells are 
$O(n)$ half-strips and four quarter-planes. 
At infinity, the total number of distinct directions that all the half-strips correspond to 
is only four, thus 
at most four additional unbounded edges of $\FCVD{S}$ may lie within these strips. 
At each quarter-plane, we have the farthest-point Voronoi diagram of $n$ points, which has $O(n)$ unbounded edges. 
Therefore the total number of unbounded edges of $\FCVD{S}$ is $O(n)$, and the same bound holds for the total number of unbounded faces of $\FCVD{S}$. 
\end{proof}

\begin{figure}
\begin{minipage}{0.49\textwidth}
\centering
\includegraphics[scale=0.7]{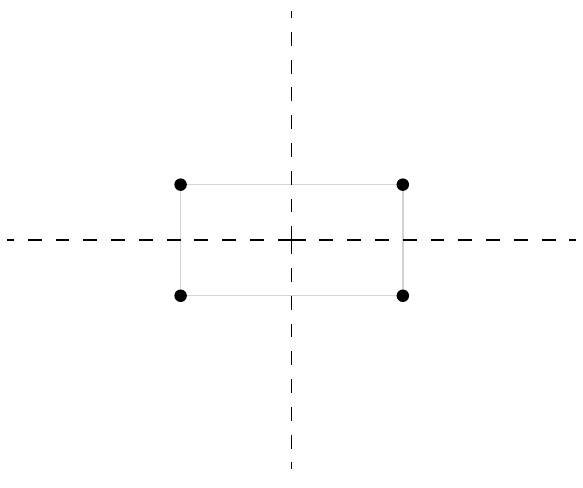} \\
(a)
\end{minipage}
\begin{minipage}{0.49\textwidth}
\centering
\includegraphics{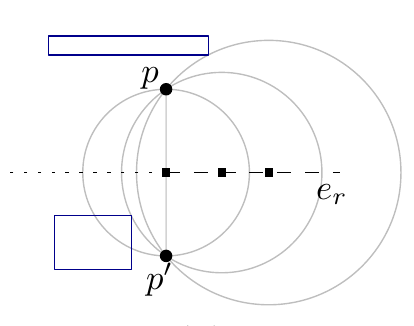} \\
(b)
\end{minipage}
\caption{(a) A cluster $P$, its convex hull (gray), and its nearest-neighbor Voronoi diagram $\VD{P}$ (dashed); (b) an illustration for the proof of Proposition~\ref{prop:fcvd-compl}}
\label{fig:fcvd}
\end{figure}

We proceed with the main  result of this section.

\begin{theorem}
For a family  $S$ of $n$ clusters, that are corners of pairwise disjoint axis-aligned rectangles,
$\FCVD{S}$  has complexity $O(n)$ and it can be constructed in $O(n\log^2{n})$ time. 
\end{theorem}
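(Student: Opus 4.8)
The complexity bound is precisely Proposition~\ref{prop:fcvd-compl}, so the only remaining task is the $O(n\log^2 n)$ construction, and the plan is to obtain it by divide-and-conquer on $S$, using the batched red-blue search of Theorem~\ref{thm:batched} inside the merge step. First I would split $S$ into two subfamilies $\redClusters$ and $\blueClusters$ of $n/2$ rectangle-corner clusters each (ordered by the $x$-coordinate of their rectangles), recursively construct $\FCVD{\redClusters}$ and $\FCVD{\blueClusters}$, and then merge them. Writing $D_\ell(t)=\max_{P\in \redClusters}\min_{p\in P}|tp|$ and $D_r(t)$ analogously, the farthest cluster of a point $t$ over all of $S$ lies in $\redClusters$ exactly when $D_\ell(t)\ge D_r(t)$; hence the merged diagram equals $\FCVD{\redClusters}$ on the side $\{D_\ell\ge D_r\}$, equals $\FCVD{\blueClusters}$ on the side $\{D_\ell\le D_r\}$, and the two are separated by the merge curve $\sigma=\{t: D_\ell(t)=D_r(t)\}$. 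If each merge runs in $O(n\log n)$ time, the recurrence $T(n)=2T(n/2)+O(n\log n)$ solves to $T(n)=O(n\log^2 n)$, as required.

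The heart of the merge is to clip the edges of the two sub-diagrams against $\sigma$ without building their overlay. I would color the edges of $\FCVD{\redClusters}$ red and those of $\FCVD{\blueClusters}$ blue and feed both into the data structure of Theorem~\ref{thm:rb-search}. Consider a red edge $e$, lying on the bisector of two clusters $P,P'\in\redClusters$, so that $D_\ell(t)=\min_{p\in P}|tp|$ for $t\in e$. As $t$ traverses $e$ it passes through a sequence of faces of $\FCVD{\blueClusters}$, the transitions being exactly the red-blue intersections of $e$ with blue edges; inside one such face $\fcreg{Q}$ one has $D_r(t)=\min_{q\in Q}|tq|$, so the part of $e$ surviving in the merged diagram is the sub-edge on which $D_\ell\ge D_r$, and its endpoint is the single point of $e$ on $\sigma$. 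I would therefore declare \emph{interesting} the red-blue intersection of $e$ that brackets this $\sigma$-crossing. The oracle, given an intersection point $p$ of red edge $e$ with blue edge $b$ (where $b$ separates two blue faces $\fcreg{Q}$ and $\fcreg{Q'}$), reads off the four clusters incident to $e$ and $b$, evaluates $D_\ell(p)$ and $D_r(p)$ in $O(1)$ time, and from the sign of $D_\ell(p)-D_r(p)$ reports on which side of $p$ along $e$ the crossing with $\sigma$ lies; once the bracketing blue face is known, the exact $\sigma$-crossing is computed in $O(1)$. The symmetric search is run on the blue edges. By Theorem~\ref{thm:batched} all interesting crossings are found with $O(n\log n)$ constant-time oracle queries, and since the merged diagram has complexity $O(n)$ by Proposition~\ref{prop:fcvd-compl}, the surviving sub-edges and the arcs of $\sigma$ can be stitched into the output subdivision in $O(n)$ additional time.

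The main obstacle is to make the batched-search interface genuinely valid, which rests on its one-interesting-intersection-per-segment hypothesis: I must prove that, for the chosen splitting order, $\sigma$ meets each edge of the two sub-diagrams in at most one point. Intuitively this should follow from a growing-circle argument in the spirit of Lemma~\ref{lemma:fcvd-bounded}, showing that along a fixed bisector edge the quantity $D_\ell-D_r$ changes sign only once; turning this intuition into a proof, and if necessary choosing the split so that $\redClusters$ and $\blueClusters$ are sufficiently separated for it to hold, is the delicate part. Two further points must be checked: that farthest-color regions may be disconnected, so $\sigma$ can consist of several components that each have to be traced and inserted, and that the linear-complexity guarantee of Proposition~\ref{prop:fcvd-compl} applies to every recursive subfamily, which it does, since any subset of pairwise-disjoint rectangle-corner clusters is again such a family. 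This keeps the per-level work $O(n\log n)$ and yields the overall $O(n\log^2 n)$ bound.
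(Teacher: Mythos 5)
Your high-level plan (divide-and-conquer, with the red--blue preprocessing of Theorem~\ref{thm:rb-search} powering the merge) is the paper's, but the step you yourself flag as ``the delicate part'' is exactly the content the paper supplies, so as written there is a genuine gap. The paper does not attempt to show that the merge curve $\sigma$ crosses an arbitrary edge of a sub-diagram at most once (for pure edges nothing of the sort is claimed); instead it restricts the searches to the \emph{internal} edges, i.e.\ the portions of edges of $\VD{P}$, and splits each such edge at the chord $pp'$ joining its two generator points. The growing-disk argument already given in the proof of Proposition~\ref{prop:fcvd-compl} (a circle through $p,p'$ whose center slides along one half of the edge only grows, so a cluster that leaves it can never re-enter unless its convex hull meets $pp'$, contradicting disjointness of the rectangles) shows that each half meets $\fcreg{P}$ --- with respect to any subfamily containing $P$ --- in at most one connected component. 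That is the single-component statement your reduction needs, and it is where the hypothesis on the input (corners of pairwise disjoint axis-aligned rectangles) actually enters; without it the appeal to a batched search is unjustified. Note also that a surviving component has \emph{two} endpoints on $\sigma$, so even per half-edge there may be two ``interesting'' intersections; the paper therefore drives the search through the general navigation-query interface of Theorem~\ref{thm:rb-search} rather than through the one-interesting-intersection-per-segment formulation of Theorem~\ref{thm:batched} that you invoke.

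A second, smaller gap is in your oracle. The sign of $D_\ell(p)-D_r(p)$ at the single point $p$ does not determine on which side of $p$ the sign change lies; the oracle must return a direction. The paper's oracle reads off the cluster $Q\in\blueClusters$ owning the face of $\FCVD{\blueClusters}$ at $p$ (available in $O(1)$ from the blue edge), forms the bisector of the distance functions to $P$ and to $Q$ restricted to the half-edge, and uses the connectivity statement above to conclude that the locus farther from $P$ than from $Q$ lies entirely to one determined side of $p$; that is what legitimizes the directional answer (and, via point location in $\FVD{Q}$ as in the Hausdorff case, an $O(\log n)$ bound per query suffices for the $O(n\log n)$ merge). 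With these two points repaired your argument coincides with the paper's, and the recurrence $T(n)=2T(n/2)+O(n\log n)$ gives the claimed $O(n\log^2 n)$ bound.
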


\begin{proof}
The complexity of $\FCVD{S}$ is $O(n)$ due to the above Proposition~\ref{prop:fcvd-compl}. 

To construct $\FCVD{S}$, we employ the divide-and-conquer strategy. 
We are not able to 
apply the divide-and-conquer algorithm from~\cite[Lemma~24]{CKPSS17}, since the merge curve is not necessarily connected, and may have cycles. 
Suppose that for two subsets $S_\ell$, $S_r$ of $S$, the farthest-color Voronoi diagram is already computed. 
To merge two diagrams, we preprocess them using our technique from Section~\ref{sec:ds}, and obtain the implicit search trees for the edges of 
$\FCVD{S_r}$.
Then we can search each internal edge of $\FCVD{S_r}$, to determine all its portions that appear in the merged diagram.
If we subdivide the edge by the segment through its generator points, the two halves will have at most one such connected component each, 
see the proof of Proposition~\ref{prop:fcvd-compl}. 
Thus we search them separately. To navigate the search along such a portion $e_r$, given a point $p$ of intersection  between $e_r$ and an edge $e'$ of $\FCVD{S_\ell}$,
we obtain the cluster $Q$
that is farthest from $p$ among the clusters in $S_\ell$. We construct the bisector between $P$ and $Q$, and this shows to which side of $p$
there lies the portion of $e$ that is farther from $P$ that from $Q$. That portion is the only one to which we should continue. 

The merging step overall takes $O(n\log n)$ time and $O(n)$ space  to preprocess the diagrams, and $O(\log{n})$-time search per each of $O(n)$ 
internal edges 
 of $\FCVD{S_\ell}$. 
The claimed time and space complexity follows.
\end{proof}

\subsection{The stabbing circle problem} 
\label{sec:stabbing}
Given a set $S$ of $n$ line segments in the plane, a circle $c$ is called a \emph{stabbing circle} for $S$ if 
every segment in $S$ has exactly one endpoint in the exterior of the disk induced by $c$.
Two stabbing circles $c_1,c_2$ are \emph{combinatorially different} if they classify the endpoints of $S$ differently.

The \emph{stabbing circle problem} for $S$  
consists of computing  (a representation of) all the combinatorially different stabbing circles for $S$ (if they exist); 
and finding stabbing circles with the minimum and maximum radius. 
Although the stabbing circle problem can be solved in  a worst-case optimal $O(n^2)$ time and space 
by applying a technique by Edelsbrunner et al.~\cite{EGS89} (see~\cite{CKPSS17} for the explanation), 
the problem can be solved much faster if the input set of segments is of some particular form~\cite{CKPSS17}. 
The  method to do so is based on 
the Hausdorff and the farthest-color Voronoi diagram (see Section~\ref{sec:hvd} and ~\ref{sec:fcvd} for the definitions), and 
its time complexity depends on the parameters of these diagrams and of the input segment set; we will detail it in the next paragraph. 
For the segments parallel to each other, it works in $O(n\log^2 n)$ time. 
 The  technique presented in Section~\ref{sec:ds} helps to improve the time complexity of this alternative method by a
 $O(\log n)$ factor. 
This automatically reduces the time required to solve the stabbing circle 
problem for parallel segments from $O(n\log^2{n})$ to $O(n\log{n})$, 
which now  matches the lower bound given in~\cite{ckpss16-eurocg}.

Let $\HVD{S}$ and $\FCVD{S}$ denote respectively the Hausdorff and the farthest-color Voronoi diagram, 
whose sites are pairs of endpoints of the segments in $S$. 
Let $\mathcal{T}_{\HVD{S}}$ and $\mathcal{T}_{\FCVD{S}}$ denote the time required to compute these two diagrams, 
and let $|\HVD{S}|$ and $|\FCVD{S}|$ respectively denote their combinatorial complexity. Additionally, we let $m$
be a parameter reflecting interaction between the segments in $S$.\footnote{ 
$m$ denotes
 the number of pairs formed by a segment $aa' \in S$ and a pure edge $uv$ of $\HVD{S}$ 
such that 
$a \in D(u) \setminus D(v)$ 
and $a' \in D(v) \setminus D(u)$. $D(u)$ (resp., $D(v)$) is the  disk
centered at $u$ (resp., at $v$) with the radius $\df{u,P}$ (resp., $\df{v,P}$), where $P$ is the cluster in $S$ 
such that $e$ is incident to $\hreg{P}$.} 
We are ready to state the new result 
for the stabbing circle problem for $S$ that improves on~\cite[Lemma~17; Corollary~3, Theorem~4]{CKPSS17}. 

\begin{theorem}
\label{thm:stabbing}
Given a set $S$ of $n$ segments in the plane in general position,
the stabbing circle problem for $S$ can be solved in time  
$O(\mathcal{T}_{\HVD{S}} +\mathcal{T}_{\FCVD{S}}+(|\HVD{S}|+|\FCVD{S}|+m)\log n)$. 
If the segments in $S$ are parallel to each other, the stabbing circle problem for $S$
can be solved in optimal $O(n\log{n})$ time and $O(n)$ space.  
\end{theorem}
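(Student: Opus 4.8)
The plan is to keep the overall algorithmic scheme of Cheong et al.~\cite{CKPSS17} and to replace only its bottleneck subroutine by the preprocessing of Theorem~\ref{thm:rb-search}, which removes one logarithmic factor. First I would recall the reduction underlying that scheme. For a prospective center $t$, a stabbing circle centered at $t$ exists exactly when one can choose a radius $\rho$ so that, for every segment of $S$, precisely one endpoint lies strictly outside the disk. Regarding the pairs of endpoints of $S$ as the input clusters, this forces $\rho$ to be at least the distance from $t$ to the \emph{farthest} cluster in the minimum-distance sense, and strictly smaller than the distance $\df{\cdot,\cdot}$ from $t$ to the \emph{nearest} cluster in the maximum-distance sense; such a $\rho$ exists iff the farthest-color distance at $t$ is smaller than the Hausdorff distance at $t$. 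Consequently the locus of admissible centers, and the combinatorics of the stabbing circles over it, is encoded in the overlay of $\HVD{S}$ and $\FCVD{S}$, and I would start by computing both diagrams in $\mathcal{T}_{\HVD{S}}+\mathcal{T}_{\FCVD{S}}$ time.

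Next I would isolate the bottleneck. The combinatorial type of the stabbing circle changes only when an endpoint of some segment crosses the circle boundary, and, walking along an edge of one diagram, these transitions arise at crossings with edges of the other diagram. In~\cite{CKPSS17} each transition is located by a \emph{find-change} search along an edge of one diagram among the edges of the other; since that other diagram is not a tree, the search is carried out by a nested point location and costs $O(\log^2 n)$ queries to the oracle. The key observation is that this is precisely a batched red-blue search of the kind treated in Section~\ref{sec:batched}, on the two plane graphs $\HVD{S}$ (red) and $\FCVD{S}$ (blue). I would therefore preprocess the two diagrams with the technique of Section~\ref{sec:ds}: by Theorem~\ref{thm:rb-search} this costs $O\big((|\HVD{S}|+|\FCVD{S}|)\log n\big)$ time and $O(|\HVD{S}|+|\FCVD{S}|)$ space and produces, for every red edge, the implicit balanced tree of its crossings with the blue edges, with $O(1)$ per navigation step.

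It then remains to drive each search with an $O(1)$-time oracle. Given a crossing $p$ of a red (Hausdorff) edge $uv$ with a blue (farthest-color) edge, the radius under consideration is the Hausdorff distance to the cluster $P$ with $uv$ incident to $\hreg{P}$, while the blue edge names the currently farthest cluster; for the segment being tracked, the oracle checks whether its two endpoints are separated by the disks $D(u)$ and $D(v)$ attached to the endpoints of the pure Hausdorff edge --- exactly the predicate counted by the parameter $m$ --- and this local comparison returns in $O(1)$ time the side on which the next transition lies. Each edge search then spends only $O(\log n)$ oracle queries; summing over the $O(|\HVD{S}|+|\FCVD{S}|)$ edges and over the $O(m)$ transition events, each processed in $O(\log n)$, yields the term $(|\HVD{S}|+|\FCVD{S}|+m)\log n$, and the minimum- and maximum-radius circles are then read off by optimizing $\rho$ along the boundary of the admissible region within the same bound. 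For parallel segments I would plug in the bounds of~\cite{CKPSS17}, namely $|\HVD{S}|,|\FCVD{S}|,m=O(n)$ and $\mathcal{T}_{\HVD{S}},\mathcal{T}_{\FCVD{S}}=O(n\log n)$, so that every term becomes $O(n\log n)$ with $O(n)$ space, matching the $\Omega(n\log n)$ lower bound of~\cite{ckpss16-eurocg}.

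The hardest part, I expect, will be to certify that each find-change search is well posed, i.e. that along each edge (possibly after splitting it at the segment through its generators, as in the proof of Proposition~\ref{prop:fcvd-compl}) the sought transition is unique and lies monotonically to one side of any probed crossing, so that the $O(1)$ oracle can validly binary-search for it. A related point that must be checked carefully is that the total number of combinatorial transitions is $O(|\HVD{S}|+|\FCVD{S}|+m)$, ensuring that $m$, and not some larger quantity, is the parameter that governs the cost of the searches.
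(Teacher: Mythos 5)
Your proposal follows essentially the same route as the paper's own (sketched) proof: keep the algorithmic framework of~\cite{CKPSS17} verbatim, and replace the $O(\log^2 n)$ nested-point-location implementation of the \emph{find-change} query by a binary search in the implicit tree of red-blue crossings provided by Theorem~\ref{thm:rb-search}, driven by an $O(1)$-time oracle that reads off the owners of the crossing point in both diagrams; this saves the $O(\log n)$ factor and the parallel-segment bound follows by substituting the known linear bounds on $|\HVD{S}|$, $|\FCVD{S}|$ and $m$. The additional well-posedness concerns you raise at the end are already handled by~\cite[Lemma~17, Theorem~1]{CKPSS17}, which the paper simply invokes.
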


\begin{proof}[sketch]
The only substantial change to the algorithm in~\cite{CKPSS17} is an alternative procedure for the \emph{find-change query}. 
In particular,~\cite[Lemma~17]{CKPSS17} gives a procedure to perform this query in $O(\log^2{n})$ time by a nested point location in $\FCVD{S}$. 
By using the technique presented in Section~3, we replace the nested point location by a binary search on $e$
 among the intersections with the edges of $\FCVD{S}$.
The oracle for our search works in $O(1)$ time: 
Given a point $p$ on the intersection of an edge $e$ of $\HVD{S}$ and $g$ of $\FCVD{S}$, the ``owners'' of $p$ in both diagrams are available automatically. 
Once the owners of $p$ are known, determining the \emph{type} of $p$ is a constant-time operation. 
Using Theorem~\ref{thm:rb-search},  we improve the result of~\cite[Lemma~17]{CKPSS17} by a factor of $O(\log{n})$.
The claim follows 
by plugging the improved result into~\cite[Theorem~1]{CKPSS17}.
\end{proof}

\section{A Divide and Conquer Algorithm for the Hausdorff Voronoi Diagram}
\label{sec:hvd-arb}

Let $S$ be a family of
point clusters in the plane, such that no two clusters have a common point. Let $n$ be the total number of points in $S$.  
We follow a general position assumption  that  
no four points  lie on the same circle.
We also assume that no cluster encloses another in its convex hull, as
the outer  cluster would have empty region in the Hausdorff Voronoi diagram.

\begin{definition}
Two clusters $P$ and $Q$ are called \emph{non-crossing},  
if the convex hull of $P\cup Q$ admits at most two supporting segments with one endpoint in $P$
and one endpoint in $Q$. 
If the convex hull of $P\cup Q$ admits more than two such supporting segments, then $P$
and $Q$ are called 
\emph{crossing}.	
\end{definition}

The \emph{farthest Voronoi diagram} of a cluster $C$, denoted as $\FVD{C}$, 
is a partitioning of the plane into regions 
where the 
\emph{farthest Voronoi region} of a point $c \in C$~is 
$
\freg{C}{c}  = \{p \mid
\forall c' \in C \setminus\{c\} \colon
d(p,c) > d(p,c') \}. $
The graph structure of $\FVD{C}$ is a tree $\fskel{C} =  \mathbb{R}^2
\setminus \bigcup_{c \in C} \freg{C}{c}$. 

Given a cluster $C \in S$ and a point $p \in \mathbb{R}^2$, we let $\df{p,C}$ denote the \emph{maximum distance} 
between $p$ and the points of the cluster $C$, i.e., $\df{p,C} = \max_{c \in C}{d(p,c)}$, 
where $d(\cdot,\cdot)$ denotes the Euclidean distance between two points. 

The \emph{Hausdorff Voronoi diagram} of $S$, denoted as $\HVD{S}$, is a partitioning of the plane into 
regions, where the  \emph{Hausdorff Voronoi region} of 
a cluster $C \in S$ is   
$ 
\hregs{S}{C}  = \{p \mid \forall C'\in S\setminus\{C\} \colon \df{p,C}
< \df{p,C'} \}.
$ 
The \emph{Hausdorff Voronoi region} of a point $c \in C$ is 
$\hregs{S}{c}  = \hregs{S}{C} \cap \freg{C}{c}.$ 
If clear from the context, we do not write subscripts for $\hrego{}$ and $\frego{}$.

For two 
clusters $P,Q \in S$, their Hausdorff bisector $\bh{P,Q} = \{y \mid  \df{y,P} = \df{y,Q} \}$ 
consists of one (if $P,Q$
are non-crossing) or more (if $P,Q$ are crossing) unbounded polygonal chains~\cite{p04,pl04}. Each   vertex 
of $\bh{P,Q}$ is the center of a circle passing through two points of
one cluster and one point of another that entirely  encloses $P$ and
$Q$. 
The vertices of $\bh{P,Q}$ are called \emph{mixed} vertices. 

\begin{definition}
\label{def:crossing}
A  mixed vertex on the bisector $\bh{P,Q}$, induced by two points $p_i, p_j \in P$ and a point $q_l \in Q$  
is called \emph{crossing}, if there is a diagonal $q_lq_r$ of $Q$
that crosses the diagonal $p_ip_j$ of $P$, and  all points $p_i,p_j,q_l,q_r$ are on the convex hull of $P\cup Q$. 
The total number of crossing vertices along the bisectors of all pairs of clusters is the \emph{number of crossings}
and is denoted by $m$.
\end{definition}


Edges of the Hausdorff Voronoi diagram are of two types: \emph{pure} edges, and  \emph{internal} edges, that are portions of edges of 
$\fskel{C}$ inside $\hreg{C}$. Each pure edge separates the Hausdorff Voronoi regions of two different  clusters, and each internal edge separates 
the Hausdorff Voronoi regions of two different point in one cluster. 
Another way to see it is the following. Any edge of the Hausdorff Voronoi diagram 
is a portion of a bisector between two points. 
If these points are of the same cluster than the edge is a pure edge, otherwise it is 
an internal edge.


\begin{property}[\cite{p04}]
\label{prop:con-comp}   
Each face of a region $\hreg{C}$ 
intersects $\fskel{C}$ in one non-empty
connected component.
\end{property}    

Given a family $S$ of point clusters, 
its Hausdorff Voronoi diagram can be computed in $O(M + n\log^2{n} + (m+K)\log{n})$\footnote{
$m = \sum_{(P,Q)}m(P,Q)$, where $m(P,Q)$ is the number of crossing mixed vertices on the bisector between $P$ and $Q$, 
for any pair of crossing clusters $P,Q$.
$K = \sum_{P\in S}{K(P)}$, $M = \sum_{P \in S}{M(P)}$, where $K(P)$ is the number of clusters 
enclosed in the minimum enclosing circle of $P$, and $M(P)$ is the number of convex hull 
points $q \in Q$ that are interacting with $P$, that is, $q$ is enclosed in the minimum 
enclosing circle of $P$ and either $Q$ is entirely enclosed in the minimum enclosing circle of $P$ or $Q$ is
crossing with $P$.}
using the divide-and-conquer strategy~\cite{pl04}.
We show how  to reduce this to $O((n+m)\log^3{n})$ time and $O(n+m)$ space (Theorem~\ref{thm:hvd-cr}) by using 
the data structure described in Section~\ref{sec:ds}.

Let the input set $S$ be subdivided into two sets $\redClusters$ and 
$\blueClusters$ by a vertical line $\ell$, such that 
the leftmost point of each cluster in $\redClusters$ and in $\blueClusters$ is respectively to the left and to the right of $\ell$. 
Let
the diagrams $\redHVD$ and $\blueHVD$ be recursively computed. 
The  nontrivial part of merging $\redHVD$ and $\blueHVD$ is to find
the merge curve $\mergeCurve$. 
The main complication, that distinguishes this case from the well-known divide-and-conquer algorithm for the Voronoi diagram of points, 
 is that
$\mergeCurve$ is not a Jordan curve, in particular, 
it has several unbounded components and several cycles~\cite{pl04}. 
The unbounded components, as well as the cycles that enclose at least one vertex of $\redHVD$ or of $\blueHVD$, 
can be found efficiently~\cite{pl04}. The cycles that enclose no vertices of $\redHVD$ or $\blueHVD$ must contain a portion of an internal edge of one of the diagrams
(which follows from Property~\ref{prop:con-comp}). Thus to identify them, we need to perform searches along the internal edges of $\redHVD$, and the ones of $\blueHVD$. 
Below we will concentrate on the former case; the latter one is analogous.


We perform these searches by means of the data structure developed in Section~\ref{sec:ds}. 
For any internal edge of $\redHVD$, the data 
structure provides a (implicit) tree storing all the intersections of this edge with the edges of $\blueHVD$. 
We search this tree, that is,  at each accessed node we need to decide which subtree we should continue with. 
Sometimes we should continue to both subtrees, but we are able to charge this branching to 
crossing mixed vertices induced by $p_1$ and $p_2$. 
Each vertex is charged at most once, and their overall number is $m$, which gives us the claimed time complexity. 
We proceed with a more detailed description. 

\begin{figure}
	\centering
\includegraphics{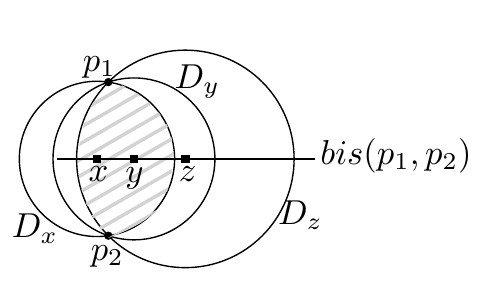}
\caption{$ D_x \cap D_z $ (shaded) is inside $D_y$}
\label{fig:circ}
\end{figure}

The following lemmas are the basis of our decision procedure.
Let $uv$ be a connected portion of an internal edge of $\redHVD$ 
induced by points $p_1, p_2$ of cluster $P \in \redClusters$, 
i.e., $uv$ is a portion of the Euclidean bisector of $p_1$ and $p_2$.  

\begin{lemma}
	\label{lemma:corr}
	For any cluster $Q$  in $\blueClusters$:  
	(i) the locus of points in $uv$ that are closer to $Q$ than to $P$ is connected.
	(ii) the locus of points in $uv$ that are closer to $P$ than to $Q$ may have up to two connected components. 
	If it has two connected components,  
	then there is a pair of crossing mixed vertex induced by $p_1,p_2$ and a point in $Q$. 
\end{lemma}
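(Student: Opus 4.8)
The plan is to parametrize the line $L$ supporting $uv$ (the bisector of $p_1,p_2$) by signed distance $s$ from the midpoint $M$ of $p_1p_2$, writing $t(s)$ for the corresponding point, $\hat u$ for the unit direction of $L$, and $c=|p_1p_2|/2>0$. Since $uv$ lies on an internal edge of $\redHVD$ induced by $p_1,p_2$, these two points are the farthest points of $P$ from every $t(s)\in uv$, so $\df{t(s),P}=\sqrt{s^2+c^2}=:\rho(s)$, and the closed disk $D_s$ centered at $t(s)$ through $p_1,p_2$ contains $P$. A point $t(s)$ is closer to $Q$ exactly when $\df{t(s),Q}<\rho(s)$, i.e.\ when $Q$ lies strictly inside $D_s$, and closer to $P$ when some point of $Q$ lies strictly outside $D_s$.

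The key observation I would exploit is that for a single point $q\in Q$ the quadratic terms cancel: squaring the comparison of $d(t(s),q)$ with $\rho(s)$ and using $\hat u\perp p_1p_2$ turns it into a \emph{linear} inequality in $s$. Hence $\{s: q\text{ lies strictly inside }D_s\}$ is an open half-line and $\{s: q\in D_s\}$ a closed half-line. Intersecting over all $q\in Q$, the set where $Q$ lies strictly inside $D_s$ (closer to $Q$) is an intersection of open half-lines, hence an interval; intersecting with $uv$ proves part (i). Likewise $\{s: Q\subseteq D_s\}$ is a closed interval $[\alpha,\beta]$, so the locus closer to $P$ is $uv$ minus this closed interval, i.e.\ at most two components, which proves the count in part (ii).

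For the second part of (ii) I would analyze the two-component case. Then $[\alpha,\beta]$ lies strictly inside $uv$, and each endpoint is tight because some point of $Q$ lies on the boundary circle: a point $q_\alpha$ forcing the lower bound and a point $q_\beta$ forcing the upper bound. From the signs of the linear coefficients, $q_\alpha$ and $q_\beta$ lie on opposite sides of the line through $p_1p_2$, so they are distinct; the general position assumption (no four concyclic points) rules out $\alpha=\beta$, since that would force $p_1,p_2,q_\alpha,q_\beta$ to be concyclic. The circles through $p_1,p_2,q_\alpha$ and through $p_1,p_2,q_\beta$, centered at $t(\alpha)$ and $t(\beta)$, enclose $P\cup Q$, so both $t(\alpha)$ and $t(\beta)$ are mixed vertices of $\bh{P,Q}$, and $p_1,p_2,q_\alpha,q_\beta$ are all vertices of the convex hull of $P\cup Q$. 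Since $q_\alpha$ and $q_\beta$ lie on opposite sides of the line $p_1p_2$, they sit on different chains of the hull between $p_1$ and $p_2$, so the four points alternate along the hull boundary and the chord $q_\alpha q_\beta$ crosses the chord $p_1p_2$. By Definition~\ref{def:crossing} this makes $t(\alpha)$ (and symmetrically $t(\beta)$) a crossing mixed vertex induced by $p_1,p_2$ and a point of $Q$, as required.

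The main obstacle I expect is this last geometric step: upgrading ``$q_\alpha,q_\beta$ on opposite sides of the line $p_1p_2$'' to the combinatorial crossing of the chords $q_\alpha q_\beta$ and $p_1p_2$ on the convex hull of $P\cup Q$. This needs two ingredients: that a point of $P\cup Q$ lying on an enclosing circle is necessarily a hull vertex, and the standard fact that two chords of a convex polygon cross if and only if their endpoints alternate along the boundary. By contrast, the linear reduction in the second paragraph makes the connectivity statements of (i) and the component count in (ii) essentially immediate.
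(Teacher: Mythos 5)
Your proof is correct and follows essentially the same route as the paper's: both arguments study the pencil of disks through $p_1,p_2$ centered on the bisector of $p_1,p_2$ and locate the points of $Q$ lying on the two extreme circles. The only substantive difference is in part (i): the paper argues by contradiction from three points $x,y,z$ on $uv$ using the containment $D_x\cap D_z\subseteq D_y$ for the disks through $p_1,p_2$, whereas you linearize the condition $q\in D_s$ in the bisector parameter $s$ and intersect half-lines; the two mechanisms are interchangeable (your half-line claim degenerates to ``empty or everything'' when $q$ lies on the line $p_1p_2$, but convexity, which is all you use, still holds). For part (ii) your endpoints $t(\alpha),t(\beta)$ are exactly the paper's $v_1,v_2$, and your hull-alternation step is in fact a more careful justification of what the paper merely asserts, namely that $q_1q_2$ and $p_1p_2$ cross as chords of the convex hull of $P\cup Q$ rather than just lying on opposite sides of a line.
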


\begin{proof}	 
	\textbf{(i).}	Suppose for the sake of contradiction 
	that the locus of points in $uv$ that are closer to $Q$ than to $P$  is not connected. 
	Then there are points $x,y,z \in uv$ such that  $y$ is between $x$ and $z$, and 
	$\df{x,Q} < \df{x,P}$,  $\df{x,Q} > \df{x,P}$, and $\df{x,Q} < \df{x,P}$.
        Let $D_x, D_y, D_z$ be the disks centered at respectively  $x,y$ and $z$, 
	and each passing through points $p_1,p_2$. See Figure~\ref{fig:circ}.
	Note that since $uv$ is a portion of an internal edge of $\redHVD$, for each point $q \in uv$, $\df{q,P} = d(q,p_1)$. 
	Thus, $Q \subset D_x$, $Q \not\subset D_y$, and $Q \subset D_z$. 
But $D_x \cap D_z \subset D_y$. We obtain a contradiction.  

	\textbf{(ii).} The first part of the statement follows directly from item  (i) since
 the locus of points in $uv$ closer to $P$ than to $Q$ is the complement of the locus of points considered in (i). 
	Now suppose the locus of points in $uv$ that are closer to $P$ than to $Q$ has two connected components. Thus, for the
	segment $v_1v_2 \subset uv$ in-between these two components, every point in $v_1v_2$ is closer to $Q$ than to $P$.
	Consider the two disks $D_{v_1}$ and $D_{v_2}$ centered respectively at $v_1$ and $v_2$, and each passing through $p_1,p_2$. See Figure~\ref{fig:cr}.
 By construction,  $D_{v_1}$ and $D_{v_2}$ contain respectively  a point $q_1 \in Q$ 
	and a point $q_2 \in Q$ on their boundaries, and these points are to the opposite sides of $p_1p_2$, and both $P$ and $Q$ are contained in each of $D_{v_1}$ and $D_{v_2}$.   
	Thus the chord $q_1q_2$ crosses the chord $p_1p_2$, and all four points appear on the convex hull of $P \cup Q$. 
	Thus $v_1$ and $v_2$ both  are crossing mixed vertices of $\bh{P,Q}$ induced respectively by $p_1,p_2,q_1$ and by $p_1,p_2,q_2$. 
	The claim follows. 
\end{proof}

\begin{figure}
\centering
\includegraphics{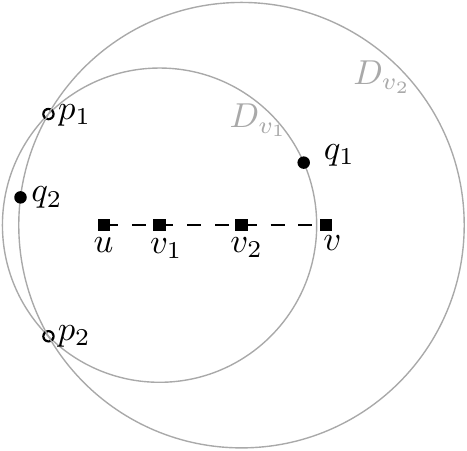}
\caption{Illustration of the proof of Lemma~\ref{lemma:corr}ii}
\label{fig:cr}
\end{figure}

\begin{lemma}
	\label{lemma:oracle}
Let $x$ be a point in $uv$, and let $Q$ be a cluster in $\blueClusters$
such that $\df{x,Q} \leq \df{x,Q'}$ for
	any $Q' \in \blueClusters$.  If $\df{x,Q} < \df{x,P}$, then in $O(\log{n})$ time it is possible to check whether 
        $xv$ contains points closer to $P$ than to $Q$. 
\end{lemma}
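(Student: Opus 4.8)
The plan is to reduce the entire query to a single comparison at the far endpoint $v$. The key is to exploit the connectivity asserted in Lemma~\ref{lemma:corr}(i). Consider the locus $L_Q := \{y \in uv : \df{y,Q} < \df{y,P}\}$ of points of $uv$ that are closer to $Q$ than to $P$; by Lemma~\ref{lemma:corr}(i) this is a connected sub-interval of $uv$. Since the hypothesis gives $\df{x,Q} < \df{x,P}$, the point $x$ lies in $L_Q$. I would then claim that $xv$ contains a point closer to $P$ than to $Q$ if and only if the endpoint $v$ is itself closer to $P$ than to $Q$, that is, if and only if $\df{v,P} < \df{v,Q}$.

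To establish this equivalence I would split into two cases. If $v \in L_Q$, then both $x$ and $v$ lie in the connected interval $L_Q$, so the whole subsegment $xv$ is contained in $L_Q$; hence no point of $xv$ is closer to $P$ than to $Q$, and the answer is negative. Conversely, if $v \notin L_Q$, i.e.\ $\df{v,P} < \df{v,Q}$, then $v$ itself is a point of $xv$ that is closer to $P$, so the answer is affirmative. (The general position assumption rules out the degenerate equality $\df{v,P} = \df{v,Q}$, which would place $v$ on a mixed vertex of the merge curve.) Thus the single test comparing $\df{v,P}$ with $\df{v,Q}$ decides the query.

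It remains to perform this test in $O(\log n)$ time. Because $uv$ is a portion of an internal edge of $\redHVD$ induced by $p_1,p_2$, we have $\df{v,P} = d(v,p_1)$, computable in $O(1)$ time once $p_1$ is known. The value $\df{v,Q} = \max_{q \in Q} d(v,q)$ is obtained by locating $v$ in the farthest Voronoi diagram $\FVD{Q}$, which is available as part of the recursively computed structure of $\blueHVD$; this point location returns the point of $Q$ farthest from $v$ in $O(\log |Q|) = O(\log n)$ time, after which a distance evaluation and the comparison cost $O(1)$. The total is $O(\log n)$, as claimed.

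The only genuinely delicate point is the reduction to a test at $v$, and it rests entirely on Lemma~\ref{lemma:corr}(i): the complementary locus of points closer to $P$ than to $Q$ may split into \emph{two} connected components (Lemma~\ref{lemma:corr}(ii)), so a purely local decision based on the derivative or sign pattern at $x$ alone would be insufficient. It is precisely because $L_Q$ is a single interval containing $x$ that inspecting only the far endpoint $v$ correctly captures whether the search direction $xv$ ever leaves $L_Q$, which is what makes the oracle both correct and as cheap as a single point location.
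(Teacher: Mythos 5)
Your proof is correct and follows essentially the same route as the paper's: both reduce the query to a single comparison of $\df{v,Q}$ (obtained by locating $v$ in $\FVD{Q}$ in $O(\log n)$ time) against $\df{v,P}=d(v,p_1)$, and both justify the negative case by the connectivity of the locus of points closer to $Q$ from Lemma~\ref{lemma:corr}(i). Your added remarks on why a local test at $x$ would not suffice and on the tie at $v$ are fine elaborations but do not change the argument.
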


\begin{proof}
	We check whether $xv$ contains points closer to $P$ than to $Q$ as follows. 
	We locate point $v$ in $\FVD{Q}$, which gives us $\df{v,Q}$.       
	Recall that $\df{v,P} = d(v,p_1)$, and this is available immediately.
	If $\df{v,Q} \geq \df{v,P}$, we return the positive answer. 
	If $\df{v,Q} < \df{v,P}$, then by Lemma~\ref{lemma:corr}i all the points in $xv$
	are closer to $Q$ than to $P$, and the answer to our query is negative.  
\end{proof}

Now we are ready to state the main result of this section. 

\begin{theorem}
\label{thm:hvd-cr}
Given a family $S$ of point clusters in the plane, 
The Hausdorff Voronoi diagram of $S$ can be 
computed in $O((n +m)\log^3{n})$ time  and $O(m+n)$ space, where 
$n$ is the total number of points in all clusters in $S$, and  
$m$ is the number of crossings for the clusters in $S$. 	
\end{theorem}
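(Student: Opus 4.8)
The plan is to establish the time and space bounds for the divide-and-conquer construction of $\HVD{S}$ by analysing the merge step, which is where our new data structure enters. First I would recall the overall recursion: the set $S$ is split by a vertical line $\ell$ into $\redClusters$ and $\blueClusters$, the diagrams $\redHVD$ and $\blueHVD$ are computed recursively, and the merge step must produce $\HVD{S}$ by identifying the merge curve $\mergeCurve$. As noted in the excerpt (citing~\cite{pl04}), the unbounded components of $\mergeCurve$ and the cycles enclosing at least one vertex of $\redHVD$ or $\blueHVD$ can be found efficiently; the genuinely new difficulty is locating the cycles of $\mergeCurve$ that enclose no vertex of either diagram. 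By Property~\ref{prop:con-comp}, each such cycle must contain a portion of an \emph{internal} edge of $\redHVD$ or $\blueHVD$, so it suffices to search along every internal edge. This is exactly a red--blue search problem, where the red segments are the internal edges of $\redHVD$ and the blue segments are the edges of $\blueHVD$, so we preprocess via Theorem~\ref{thm:rb-search} in $O(n\log n)$ time and $O(n)$ space.

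Next I would describe how a single internal-edge search is navigated, drawing on Lemmas~\ref{lemma:corr} and~\ref{lemma:oracle}. For an internal edge portion $uv$ induced by points $p_1,p_2$ of a cluster $P \in \redClusters$, at an accessed node of the implicit search tree we have an intersection point $p$ of $uv$ with an edge of $\blueHVD$; the cluster $Q\in\blueClusters$ owning the region on the relevant side is available, and by point location in $\FVD{Q}$ we can compute $\df{p,Q}$ in $O(\log n)$ time and compare it to $\df{p,P}=d(p,p_1)$. Lemma~\ref{lemma:oracle} then tells us, in $O(\log n)$ time, whether a given half $xv$ of the edge still contains points closer to $P$ than to $Q$, which is the decision needed to steer the binary search toward the portion of $\mergeCurve$ we seek. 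The crucial subtlety, supplied by Lemma~\ref{lemma:corr}(ii), is that the locus of points on $uv$ closer to $P$ than to $Q$ may split into \emph{two} components, forcing the search to branch into both subtrees; but whenever this happens, Lemma~\ref{lemma:corr}(ii) guarantees a pair of \emph{crossing} mixed vertices induced by $p_1,p_2$. Since each search along a fixed internal edge splits $uv$ at the segment $p_1p_2$ into two halves, each half has at most one such component, so each branching event can be charged to a distinct crossing mixed vertex.

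The heart of the running-time argument is therefore the charging scheme. Each branch of the search visits $O(\log n)$ tree nodes (the tree has height $O(\log n)$ by Theorem~\ref{thm:rb-search}), each node costs $O(\log n)$ for the oracle by Lemma~\ref{lemma:oracle}, and the number of distinct branches along all internal edges at one level of the recursion is bounded by the number of internal edges (giving $O(n)$ base searches) plus the number of branchings, which is $O(m)$ by the crossing-vertex charging. This yields $O((n+m)\log^2 n)$ work for one merge step. The main obstacle I expect is verifying that no crossing vertex is charged more than once across the whole algorithm, and that the total number of internal edges and crossings summed over all $O(\log n)$ recursion levels remains $O((n+m)\log n)$ rather than blowing up; this requires the standard fact that the Hausdorff Voronoi diagram of any subfamily has complexity $O(n'+m')$ on $n'$ points with $m'$ crossings, so that the sizes telescope across levels.

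Finally, multiplying the per-level cost $O((n+m)\log^2 n)$ by the $O(\log n)$ levels of the recursion gives the claimed $O((n+m)\log^3 n)$ total time. For space, each merge reuses $O(n)$ working space for the preprocessing of Theorem~\ref{thm:rb-search}, and the output diagram has size $O(n+m)$, so the space bound is $O(n+m)$. I would conclude by remarking that the correctness of the merge follows from the fact that the three cases (unbounded components, vertex-enclosing cycles, empty cycles) together cover all components of $\mergeCurve$, the first two handled as in~\cite{pl04} and the third by the internal-edge searches described above.
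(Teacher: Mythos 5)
Your proposal follows essentially the same route as the paper: the same divide-and-conquer skeleton from~\cite{pl04}, the same reduction of the remaining difficulty to searching internal edges for vertex-free cycles via Property~\ref{prop:con-comp} and Theorem~\ref{thm:rb-search}, the same oracle built from Lemmas~\ref{lemma:corr} and~\ref{lemma:oracle}, and the same charging of branchings to pairs of crossing mixed vertices, yielding $O((n+m_{\ell r})\log^2 n)$ per merge and $O((n+m)\log^3 n)$ overall. The one step you flag as an obstacle but do not actually close --- that no pair of crossing vertices is charged twice --- is handled in the paper not by splitting $uv$ at the segment $p_1p_2$ (that device belongs to the farthest-color diagram argument), but by observing that for two nodes of the same search tree, either one is an ancestor of the other or their intervals are disjoint, so two branchings for the same cluster $Q$ on the same edge would force the locus of points closer to $P$ than to $Q$ to have three components, contradicting Lemma~\ref{lemma:corr}(ii).
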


\begin{proof}
The algorithm is a divide-and-conquer algorithm, that follows the one 
of Papadopoulou and Lee~\cite{pl04}, except for 
finding a point on each cycle of the merge curve in the merging step.

Let the input family  $S$ be subdivided into two subfamilies $\redClusters$ and 
$\blueClusters$ by a vertical line $\ell$ (using the location of the leftmost point in each cluster), and
let the diagrams $\redHVD$ and $\blueHVD$ be recursively computed. 
The main difference from the well-known divide-and-conquer algorithm for the Voronoi diagram of points~\cite{Shamos1975} 
 is that the merge curve
$\mergeCurve$ for $\redHVD$ and $\blueHVD$  may have several unbounded components and several cycles~\cite{pl04}. 
Merging    $\redHVD$ and $\blueHVD$ then  consists of (1) finding a point on each component of the merge curve, (2) tracing the merge curve and 
 stitching the relevant parts of  $\redHVD$ and $\blueHVD$ together.
It is known how to perform task (2) such that the overall time required for this task  
during the course of the algorithm is $O(n\log n + m)$~\cite{pl04}.
Task (1) for the unbounded components 
can be done in time $O(n)$, where $n$ is the total number 
of points in the clusters in $\redClusters$ and 
$\blueClusters$~\cite{pl04}. Each cycle of $\mergeCurve$ that encloses at least one vertex of $\redHVD$ or of $\blueHVD$, 
can be found in $O(n\log{n})$ time~\cite{pl04}.

We describe how to find a point on each cycle that encloses no vertices of $\redHVD$ or $\blueHVD$. 
	We make use of the fact that such cycles enclose a portion of an internal edge of one of the diagrams (which follows from 
	Property~\ref{prop:con-comp}). We preprocess $\redHVD$ and $\blueHVD$  as shown in Section~\ref{sec:ds}, resulting in a data structure, 
	that for every edge of $\redHVD$ provides a (implicit) tree storing all the 
	intersections between that edge and the edges of $\blueHVD$. 

	For each internal edge of $\redHVD$, we examine its portions that are outside any merge curve identified so far. 
	Let $uv$ be such a portion of an internal edge induced by $p1,p2 \in P, P \in \redClusters$. That is, 
	$uv$ 
	is a portion of the bisector of $p_1$ and $p_2$, and 
	both $u,v$ are closer to some cluster in $\blueClusters$ than to $P$.
	We must find all the portions of $uv$ that are closer to $P$ than
	to any cluster in $\blueClusters$, or report that they do not exist. 

	We search the tree that  stores the intersections between $uv$ and the edges of $\blueHVD$. 
As we navigate in the tree we keep track of the portion of $uv$ that corresponds to the current node:
the root of the tree corresponds to $uv$, and each consecutive  node $y$ on any root-to-leaf path subdivides 
the current portion into two, which are assigned to the two children of  $y$. 
Observe that no two distinct nodes of the tree
may be such that the portions corresponding to their left children intersect, and the ones of their right children intersect as well.
Indeed, for two nodes $x$ and $y$, either $x$ is an ancestor of $y$, and then both intervals of $y$ are contained in one of the intervals of $x$, 
or $y$ is an ancestor of $x$ (a symmetric case), or $x$ and $y$ do not lie on one root-to-leaf path, and then  the intervals of $x$
do not intersect the ones of $y$. 
 This observation will be useful for 
estimating the time complexity of the algorithm. 

Consider a tree node $x$, let $e$ be the edge of $\blueHVD$, such that $x$ is the intersection point between $uv$ and $e$,
and let $u'v'$ be the portion of $uv$ that corresponds to $x$. 
We need to decide in which subtree of $x$ to continue the search.  Suppose first that $e$ is an internal edge of $\blueHVD$. 
%
%
Let $Q$ be the cluster in $\blueClusters$, such that $e$ belongs to $\hregs{\blueClusters}{Q}$.  
	 Employing Lemma~\ref{lemma:oracle}, we determine if $u'x$ and/or $xv'$ contain points closer to $P$ than to $Q$. 
	 If $u'x$ (resp., $xv'$) does not contain such points, than 
it cannot intersect the merge curve, 
and the corresponding subtree should be ignored. Otherwise, the search continues in that subtree. 

 Suppose now that $e$ is a pure edge of $\blueHVD$, that separates the regions of  some clusters $Q,Q' \in \blueClusters$. 
	 We perform the checks for both $Q$ and $Q'$, and we continue to a subtree only if the corresponding portion of $u'v'$ 
contains points closer to $P$ that to $Q$ and points closer to $P$ than to $Q'$. 

We now analyze the time complexity of the merging procedure.
	Let $m_{\ell r}$ be the total number of crossings between pairs of clusters, one of which is in $\redClusters$ and the other is in $\blueClusters$. 
	All the unbounded components of the merge curve, as well as the cycles that contain Voronoi vertices, can be found in total time 
	$((n + m_{\ell r})\log n)$~\cite{pl04}. 
	When the merge curve is fully determined, 
stitching the appropriate pieces of $\redHVD$ and $\blueHVD$ together can be done within the same time bound~\cite{pl04}. 
Below we prove that our procedure  to find
	all the cycles of the merge curve that do not contain Voronoi vertices, requires $O((n+m_{\ell r})\log^2 n)$.  
The claimed overall time complexity to construct $\HVD{S}$ then will follow from the fact that for each pair of clusters in $P,Q \in S$, 
	it happens at most once that $P$ is in $\redHVD$ and $Q$ is in $\blueHVD$ or vice versa. Thus the total sum of the numbers 
	$m_{\ell r}$ in all the merging steps is at most $m$. 

	To prove that 
finding all empty cycles of the merge curve requires  $O((n+m_{\ell r})\log^2 n)$ time, 
we note that if at a node $x$ the search has continued in both subtrees of $x$, 
we necessarily have that the cluster $Q \in \blueClusters$ that is the closest to $x$, 
	such that both $u'x$ and $xv'$ contain points closer to $P$ than to $Q$. By Lemma~\ref{lemma:corr}, 
there are two crossing mixed vertices induced by $p1, p2$ and $Q$. We charge the branching of our search at $x$ to that pair of vertices. 
Since no two nodes of the tree may have the portions of $uv$ that correspond to their 
left children and the ones corresponding to their right children, respectively,  both intersecting (as observed above),  
 a pair of crossing vertices  can be charged at most once.
Therefore,  the total number of  nodes of all the search trees where our search went both ways
 is $O(m_{\ell r})$. The claim follows, since 
the number of leaves reached by our search
 in one tree is $1+t$, where $t$ is the number of times the search was continued to  both subtrees of a node. 
Since the height of any leaf is $O(\log{n})$, and the time spent in one node is $O(\log n)$, 
 the total time for searching one tree is $O((1 + t)\log^2{n})$. 
\end{proof}


\bibliography{rb}

\begin{thebibliography}{10}
\providecommand{\url}[1]{\texttt{#1}}
\providecommand{\urlprefix}{URL }

\bibitem{ahiklmps01}
Abellanas, M., Hurtado, F., Icking, C., Klein, R., Langetepe, E., Ma, L.,
  Palop, B., Sacrist{\'a}n, V.: The farthest color {V}oronoi diagram and
  related problems. In: 17th Eur. Workshop on Comput. Geom. pp. 113--116
  (2001), {T}ech. Rep. 002 2006, Univ. Bonn

\bibitem{aronov2006}
Aronov, B., Bose, P., Demaine, E.D., Gudmundsson, J., Iacono, J., Langerman,
  S., Smid, M.: Data structures for halfplane proximity queries and incremental
  voronoi diagrams. In: 7th Latin American Symp. on Theoretical Informatics
  (LATIN). pp. 80--92. Springer (2006)

\bibitem{chan94}
Chan, T.M.: A simple trapezoid sweep algorithm for reporting red/blue segment
  intersections. In: CCCG. pp. 263--268 (1994)

\bibitem{CEGS94}
Chazelle, B., Edelsbrunner, H., Guibas, L.J., Sharir, M.: Algorithms for
  bichromatic line-segment problems and polyhedral terrains. Algorithmica
  11(2),  116--132 (1994)

\bibitem{CKLP16}
Cheilaris, P., Khramtcova, E., Langerman, S., Papadopoulou, E.: A randomized
  incremental algorithm for the {H}ausdorff {V}oronoi diagram of non-crossing
  clusters. Algorithmica  76(4),  935--960 (2016)

\bibitem{CEGGHLLN11}
Cheong, O., Everett, H., Glisse, M., Gudmundsson, J., Hornus, S., Lazard, S.,
  Lee, M., Na, H.S.: Farthest-polygon voronoi diagrams. Comput. Geom.  44(4),
  234--247 (2011)

\bibitem{ckpss16-eurocg}
Claverol, M., Khramtcova, E., Papadopoulou, E., Saumell, M., Seara, C.:
  Stabbing circles for some sets of {D}elaunay segments. In: 32th Eur. Workshop
  on Comput. Geom. (EuroCG). pp. 139--143 (2016)

\bibitem{CKPSS17}
Claverol, M., Khramtcova, E., Papadopoulou, E., Saumell, M., Seara, C.:
  Stabbing circles for sets of segments in the plane. Algorithmica  (2017),
  {DOI} 10.1007/s00453-017-0299-z

\bibitem{dmt06}
Dehne, F., Maheshwari, A., Taylor, R.: A coarse grained parallel algorithm for
  {H}ausdorff {V}oronoi diagrams. In: Int. Conf. on Parallel Processing (ICPP).
  pp. 497--504. IEEE (2006)

\bibitem{dss86}
Driscoll, J.R., Sarnak, N., Sleator, D.D., Tarjan, R.E.: Making data structures
  persistent. In: 18th annual ACM Symp. on Theory of Computing. pp. 109--121.
  ACM (1986)

\bibitem{EGS89}
Edelsbrunner, H., Guibas, L.J., Sharir, M.: The upper envelope of piecewise
  linear functions: algorithms and applications. Discr. \& Comput. Geom.  4(4),
   311--336 (1989)

\bibitem{Guibas1989}
Guibas, L.J., Sharir, M., Sifrony, S.: {On the general motion-planning problem
  with two degrees of freedom}. Discr. {\&} Comput. Geom.  4(5),  491--521
  (1989)

\bibitem{hks1993dcg}
Huttenlocher, D.P., Kedem, K., Sharir, M.: The upper envelope of {V}oronoi
  surfaces and its applications. Discr. \& Comput. Geom.  9(3),  267--291
  (1993)

\bibitem{kp16a}
Khramtcova, E., Papadopoulou, E.: Randomized incremental construction for the
  {H}ausdorff {V}oronoi diagram of point clusters. ArXiv e-prints  (2016),
  {arXiv:1612.01335}

\bibitem{ms1988}
Mairson, H.G., Stolfi, J.: Reporting and counting intersections between two
  sets of line segments. In: Theoretical Foundations of Computer Graphics and
  CAD, pp. 307--325 (1988)

\bibitem{ms01}
Mantler, A., Snoeyink, J.: Intersecting red and blue line segments in optimal
  time and precision. In: Discr. and Comput. Geom.: Japanese Conference, JCDCG
  2000, Revised Papers. pp. 244--251 (2001)

\bibitem{M01}
Mehlhorn, K., Meiser, S., Rasch, R.: {Furthest site abstract Voronoi diagrams}.
  Int. J. Comput. Geom. {\&} Appl.  11(06),  583--616 (2001)

\bibitem{M87}
Mount, D.M.: Storing the subdivision of a polyhedral surface. Discr. \& Comput.
  Geom.  2(2),  153--174 (1987)

\bibitem{ps94}
Palazzi, L., Snoeyink, J.: Counting and reporting red/blue segment
  intersections. CVGIP: Graphical Models and Image Processing  56(4),  304--310
  (1994)

\bibitem{p04}
Papadopoulou, E.: The {H}ausdorff {V}oronoi diagram of point clusters in the
  plane. Algorithmica  40(2),  63--82 (2004)

\bibitem{pl04}
Papadopoulou, E., Lee, D.T.: The {H}ausdorff {V}oronoi diagram of polygonal
  objects: A divide and conquer approach. Int. J. of Comput. Geom. \& Appl.
  14(06),  421--452 (2004)

\bibitem{Shamos1975}
Shamos, M.I., Hoey, D.: {Closest-point problems}. In: 16th Annual Symp. on
  Foundations of Computer Science ({SFCS}). pp. 151--162. IEEE (1975)

\bibitem{S88}
Sharir, M.: The shortest watchtower and related problems for polyhedral
  terrains. Inf. Process. Lett.  29(5),  265--270 (1988)

\bibitem{Z97}
Zhu, B.: Computing the shortest watchtower of a polyhedral terrain in
  {$O(n\log{n})$} time. Comput. Geom.  8(4),  181--193 (1997)

\end{thebibliography}
\end{document}